\documentclass[journal,twocolumn,letterpaper]{IEEEtran}

\usepackage{graphicx,cite,epsfig,amssymb,amsmath}
\usepackage{pifont}
\usepackage{stmaryrd}
\usepackage{bbding}
\usepackage{subfigure}
\usepackage{algorithm}
\usepackage{algorithmic}
\usepackage{slashbox,multirow}
\usepackage{mathrsfs}
\usepackage{latexsym}
\usepackage{indentfirst}
\usepackage{fmtcount}
\usepackage{cite}
\usepackage{float}
\usepackage{morefloats}

\usepackage{color}

\newcommand{\tabincell}[2]{\begin{tabular}{@{}#1@{}}#2\end{tabular}}

\allowdisplaybreaks[4]

\newtheorem{theorem}{Theorem}

\newtheorem{lemma}{Lemma}

\begin{document}
\title{Gossip-based Information Spreading in Mobile Networks}
\author{Huazi~Zhang, \textit{Student Member, IEEE}, Zhaoyang~Zhang, \textit{Member, IEEE}, and Huaiyu~Dai, \textit{Senior Member, IEEE}
\thanks{
Huazi~Zhang (e-mail: {\tt hzhang17@ncsu.edu}) is with the Department of Information Science and Electronic Engineering, Zhejiang University, and is currently a visiting Ph.D. student in North Carolina State University, under the joint supervision of Zhaoyang Zhang and Huaiyu Dai.
Zhaoyang~Zhang (e-mail: {\tt ning\_ming@zju.edu.cn})
is with the Department of Information Science
and Electronic Engineering, Zhejiang University, China.
Huaiyu~Dai
(e-mail: {\tt huaiyu\_dai@ncsu.edu}) is with the Department of
Electrical and Computer Engineering, North Carolina State
University, USA.}

\thanks{This work was supported in part by the National Key Basic Research Program of China (No. 2012CB316104), the National Hi-Tech R\&D Programm of China (No. 2012AA121605), the National Science Foundation of China under grant No. 61371094, the Supporting Program for New Century Excellent Talents in University (NCET-09-0701), and the US National Science Foundation under Grants ECCS-1002258 and CNS-1016260.}}
\maketitle

\begin{abstract}
In this paper, we analyze the effect of mobility on information spreading in geometric networks through natural random walks. Specifically, our focus is on epidemic propagation via \emph{mobile gossip}, a variation from its static counterpart. Our contributions are twofold. Firstly, we propose a new performance metric, mobile conductance, which allows us to separate the details of mobility models from the study of mobile spreading time. Secondly, we utilize geometrical properties to explore this metric for several popular mobility models, and offer insights on the corresponding results. Large scale network simulation is conducted to verify our analysis.
\end{abstract}

\begin{keywords}
Conductance, Gossip, Information Spreading, Mobile Networks, Mobility Models.
\end{keywords}

\setlength\arraycolsep{1pt}

\IEEEpeerreviewmaketitle

\section{Introduction}
\subsection{Motivation}
Mobile networks receive increasing research interest recently; mobile ad hoc networks (MANET) and vehicular ad hoc networks (VANET) are two prominent examples. In many real world networks, an interesting application is to broadcast the information from some source node to the whole network. For wireless ad hoc and sensor networks, a node triggered by the event of interest may want to inform the whole network about the situation as quickly as possible. For social networks, rumors and stories are forwarded by people via different communication media. In these and many other applications, how fast a message can be spread to the whole network is of particular interest as opposed to the general network throughput.

Mobility introduces challenges as well as opportunities. It is known to improve the network throughput as shown in \cite{Mobility-throughput}. However, its effect on information spreading is still not very well understood. In mobile networks, will the information spreading speed up or slow down? How can we quantify the potential improvement or degradation due to mobility? How may the different mobility patterns affect the information spreading? These problems are of major importance and deserve further study.

\subsection{Related Works}
Information spreading in static networks has been well
studied in literature \cite{Gossip}, \cite{Gossip2}. Gossip
algorithm, dated back to \cite{Early-gossip}, is a simple but
effective fully distributed information spreading strategy, in
which every node randomly selects \emph{only one} of its neighbors for message
exchange during the information spreading process.
Therefore, gossip algorithms assume certain advantages over other widely adopted information spreading approaches such as flooding. Furthermore, gossip algorithms can achieve near-optimal performance for a class of static network graphs including random geometric graphs \cite{Gossip}. It is also found that the spreading time in static networks is closely related to the geometry of a network, named ``conductance" \cite{Conduct}, \cite{Conduct2}, which essentially represents the bottleneck for information exchange within a network.

Mobile networks have drawn significant research interest in
recent years. Traditionally, mobility is viewed as a negative
feature as it adds additional uncertainty to wireless networks, and
incurs more challenges in channel estimation. Recently, mobility has been revisited for its potential to improve network performance. In
the seminal works \cite{Mobility-throughput}, mobility is shown to
significantly increase the sum-throughput of the network under the fully random mobility model; later the study is extended to the one-dimensional mobility model\cite{One-dim-mobility-1}.
Subsequently, the throughput-delay tradeoff is further investigated
in the context of mobile ad-hoc networks \cite{Throu-delay-Gamal-IT2006,Throu-delay,Throu-delay-Sharma-Infocom2006,Throu-delay2,Throu-delay3,Throu-delay4}.

There has been extensive study on both information
spreading and mobility of networks, separately. Recently, some interesting analytical results for information spreading in dynamic wireless networks have emerged. In particular, the delay of epidemic-style routing is studied in \cite{XiaolanZhang},
\cite{PoissonMeetingP2PDelay} assuming exponential distributions for inter-meeting times. The scaling properties of information propagation between a pair of nodes in large mobile wireless networks are explored in \cite{ZKongMobiHoc2008}, for the constrained i.i.d. mobility and discrete-time Brownian motion models. Subsequently, an upper bound of the information propagation speed for the flooding mechanism is derived in \cite{Jacquet2010} for the random walk mobility model, with the emphasis on the sparse networks (in particular when the node density tends to 0). Some other recent progress includes \cite{PSSS11}, \cite{PPPU11}, where again only \emph{random-walk like mobility models} \cite{TCampWCMC2002} are considered. An exception is the work \cite{Spreading-MEG}, in which an upper bound of the flooding time is derived in terms of \emph{node-expansion properties} of a general stationary Markovian evolving graph. However, this approach requires a node transmission range \emph{above the connectivity threshold}. When extended to the sparse scenario \cite{AClementiICALP2009}, the previous expansion technique fails to work, and a set of probabilistic results are developed for a special case of the random walk mobility model.
In \cite{Gossip-mobility}, the impact of mobility on the average consensus problem is investigated, where again the transmission range is required above the connectivity threshold, and only memoryless (time-independent) mobility models are considered.

\subsection{Summary of Contributions}
Motivated by the existing study, we intend to develop a more general analytical framework for gossip-based information spreading in mobile networks which can address various types of mobility patterns, and admits wider applicability concerning transmission radius and network connectivity. The main contributions of this paper are summarized below.

\begin{enumerate}
\item
Based on a ``move-and-gossip" information spreading model, we propose a new metric, mobile conductance, which represents the capability of a mobile network to conduct information flows. Mobile conductance is dependent not only on the network structure, but also on the mobility patterns. Facilitated by the definition of mobile conductance, a general result on the mobile spreading time is derived for a class of mobile networks modeled by stationary Markovian evolving graphs, with a less stringent requirement on node transmission range and network connectivity.

\item
We evaluate the mobile conductances for various mobility models, including fully random mobility, partially random mobility, velocity constrained mobility, one-dimensional and two-dimensional area constrained mobility, and offer insights on the results. The results are summarized in Table~\ref{result-table}\footnote{We follow the standard notations. Given non-negative functions $f(n)$ and $g(n)$: $f(n) = O(g(n))$ if there exists a positive constant $c_1$ and an integer $k_1$ such that $f(n) \leq c_1 g(n)$ for all $n \geq k_1$;
$f(n) = \Omega(g(n))$ if there exists a positive constant $c_2$ and an integer $k_2$ such that $f(n) \geq c_2 g(n)$ for all $n \geq k_2$;
$f(n) = \Theta(g(n))$ if both $f(n) = O(g(n))$ and $f(n) = \Omega(g(n))$ hold;
$f(n) = o(g(n))$ if there exists a positive constant $c_3$ such that $f(n) \leq c_3 g(n)$ for all $n \geq c_3$;
$f(n) = \omega(g(n))$ if there exists a positive constant $c_4$ such that $f(n) \geq c_4 g(n)$ for all $n \geq c_4$.}. In particular, the study on the fully random mobility model reveals that the potential improvement in information spreading time due to mobility is dramatic: from $\Theta \left( {\sqrt n } \right)$ to $\Theta \left( \log n \right)$. We have also carried out large scale simulations to verify our analysis.

\end{enumerate}

The rest of this paper is organized as follows. System, mobility, and information spreading models are presented in Section \ref{problem-formulation}. Mobile conductance is defined in Section \ref{main}, and a general result on the mobile spreading time is derived. In Section \ref{app}, mobile conductances of several popular mobility models (see \cite{Velocity-mobility,One-dim-mobility-2,Two-dim-mobility,Sunlei2013CRBlackhole,Sunlei2013MobileCRN} and the references therein) are evaluated and corroborated by simulation results, leading to some interesting insights. Finally Section \ref{conclusion} concludes the work.

\begin{table}[!t]
\renewcommand{\arraystretch}{1.1}
\caption{Conductances of Different Mobility Models} \label{result-table} \centering
\begin{tabular}{|c|c|}
\hline
Static Conductance & ${\Phi _s} = \Theta \left( {\sqrt {\frac{{\log n}}{n}} } \right).$ \\
\hline\hline
Mobility Model & Mobile Conductance ${\Phi_m}$ \\
\hline
Fully Random & $\Theta \left( 1 \right).$ \\
\hline
Partially Random & ${\left( {\frac{{n - k}}{n}} \right)^2}{\Phi _s} + \frac{{k\left( {2n - k} \right)}}{{2{n^2}}}.$ \\
\hline
\multirow{3}{*}{\tabincell{c}{Velocity\\ Constrained}} & \multirow{3}{*}{$\Theta \left( {\max \left( {{v_{\max }},r} \right)} \right)$} \\
&\\
&\\
\hline
\multirow{3}{*}{\tabincell{c}{Area Constrained:\\ One-Dim}} & \multirow{3}{*}{$\frac{n_V^2+n_H^2}{n^2}{\Phi _s} + {\frac{{{n_V}{n_H}}}{{n^2}}} .$} \\
&\\
&\\
\hline
\multirow{3}{*}{\tabincell{c}{Area Constrained:\\ Two-Dim}}  & \multirow{3}{*}{$\Theta \left( {\max \left( {{r_c},r} \right)} \right)$} \\
&\\
&\\
\hline
\end{tabular}
\end{table}

\section{Problem Formulation}\label{problem-formulation}

\subsection{System Model}
We consider an $n$-node mobile network on a unit square $\tilde \Psi$, modeled as a time-varying graph $G_{t}\triangleq(V,E_t)$ evolving over discrete time steps. The set of nodes $V$ are identified by the first $n$ positive integers $[n]$. One key difference between a mobile network and its static counterpart is that, the locations of nodes change over time according to certain mobility models, and so do the connections between the nodes represented by the edge set $E_t$.

It is assumed that the moving processes of all nodes $\{X_i(t), t \in \mathrm{\mathbf{N}}\}$, $i \in [n]$, are independent stationary Markov chains, each starting from its stationary distribution with the transition distribution $q_i$, and collectively denoted by $\{\mathbf{X}(t), t \in \mathrm{\mathbf{N}}\}$ with the joint transition distribution $Q=\prod\limits_{i = 1}^n {{q_i}}$. While not necessary, we assume the celebrated random geometric graph (RGG) model \cite{RGG-book} for the initial node distributions for concreteness (particularly in Section \ref{app}), i.e., $G_0 = G(n,r)$, where $r$ is the common transmission range assumed for all nodes. Under many existing random mobility models such as those considered in \cite{Mobility-throughput,ZKongMobiHoc2008,Spreading-MEG,Velocity-mobility,Gossip-mobility} and in this work, nodes will maintain the uniform distribution on the state space $\tilde \Psi$ over the time. Two nodes are neighbors if they are within distance $r$ at some time instant. The speed of node $i$ at time $t$ is defined by $v_i(t)=|X_i(t+1)-X_i(t)|$, assumed upper bounded by $v_{max}$ for all $i$ and $t$. We also assume a less stringent requirement for network connectivity, as described below: for an arbitrary node subset $S' \subset V$, it is not totally isolated from its complement $\bar{S'}$ after the move in the expectation sense\footnote{More specifically, our result requires $\mathbb{E}_Q[N_{S'}(t+1)]>0$; see \eqref{mobile-conductance-approx}.}; for RGG this implies $v_{\max}+r = \Omega\left({\sqrt {\frac{{\log n}}{n}} }\right)$\footnote{This requirement is already a relaxation as compared to $r = \Omega\left({\sqrt {\frac{{\log n}}{n}} }\right)$ demanded for static networks.}.

\subsection{Mobility Model}\label{mobility-model}
For notation convenience, the unit square is discretized into a grid with a sufficiently high resolution $\delta$: $\Psi=\{ (i\delta,j\delta)|0\leq i,j \leq \lfloor 1/\delta \rfloor\}$. Denote $q^{i}_{xy}=q_i(X_i(t+1)=y|X_i(t)=x)$, $x,y \in \Psi$, $\forall i$, as a generic element of the transition matrix $Q$. The following mobility models are considered in this study:

Fully Random Mobility \cite{Mobility-throughput}:  $X_i(t)$ is uniformly distributed on $\Psi$ and i.i.d. over time. In this case $v_{\max}=\Theta(1)$, $q^{i}_{xy}=1/|\Psi|, \ \forall i, \ \forall x,y \in \Psi$. This idealistic model is often adopted to explore the largest possible improvement brought about by mobility.

Partially Random Mobility: $k$ randomly pre-selected nodes are mobile, following the fully random mobility
model, while the rest $n-k$ nodes stay static. This is one generalization of the fully random mobility model.

Velocity Constrained Mobility \cite{Velocity-mobility, Spreading-MEG}: This is another generalization of the fully random mobility model, with the node speed bounded by an arbitrary $v_{\max}=O(1)$. In this case, $q^{i}_{xy}=1/|\mathcal{C}(x)|, \ \forall i$ and $\forall y \in \mathcal{C}(x)$, where $\mathcal{C}(x)=\{y\in \Psi||y-x|\leq v_{\max }\}$; and $q^{i}_{xy}=0$, otherwise.

One-dimensional Area Constrained Mobility \cite{One-dim-mobility-1,One-dim-mobility-2}: In this model, the mobile nodes move either vertically (named V-nodes) or horizontally (named H-nodes), reminiscent of trains or automobiles moving on the railways or city streets.
It is assumed that both V-nodes and H-nodes are uniformly and randomly distributed on $\Psi$, and the the mobility pattern of each
node is ``fully random" on the corresponding one-dimensional path. Let $x\triangleq(x_\alpha,x_\beta)\in \Psi$ and $y\triangleq(y_\alpha,y_\beta)\in \Psi$. For a V-node, $q^{i}_{xy}=1/(\lfloor 1/\delta \rfloor+1), \ \forall i$ and $\forall y \in \mathcal{V}(x)$, where $\mathcal{V}(x)=\{y\in \Psi|y_\alpha=x_\alpha\}$; and $q^{i}_{xy}=0$, otherwise. The transition probability for an H-node is similarly defined.

Two-dimensional Area Constrained Mobility\cite{ZKongMobiHoc2008,Two-dim-mobility}: In this model, each node $i$ has a \emph{unique} home point $i_h$, and moves around the home point within a disk of radius $r_c$ uniformly and randomly.
The home points are fixed, independently and uniformly distributed on $\Psi$. Here $q^{i}_{xy}=1/K, \ \forall i$ and $\forall x,y \in \Psi_i$, where $\Psi_i=\{y\in \Psi||y-i_h|\leq r_c\}$ while $K$ is the number of grid points in a circle of radius $r_c$; and $q^{i}_{xy}=0$, otherwise. $r_c$ is also called \emph{mobility capacity}. This model may simulate the patrol scenarios by police or automatic mobile agents.

\subsection{Spreading Model: Move-and-Gossip}
We consider the problem of single-piece information dissemination through a natural randomized gossip algorithm in \cite{Gossip}. The extension to the multi-piece dissemination problem readily follows and will be addressed in our future work. In contrast to the static case, there is an additional moving process mixed with the gossip process. In this study, we adopt a move-and-gossip model as shown in Fig.~\ref{Move-and-Gossip} to describe information spreading in a mobile network and facilitate our analysis. Specifically, each time slot is decomposed into two phases: each node first \emph{moves} independently according to some mobility model as discussed above, and then \emph{gossips} with \emph{one} of its \emph{new} neighbors. During the gossip step, it is assumed that each node \emph{independently} contacts one of its neighbors uniformly at random, and during each meaningful contact (where at least one node has the piece of information), the message is successfully delivered in either direction (through the ``push" or ``pull" operation).
\begin{figure}[h] \centering
\includegraphics[width=0.45\textwidth]{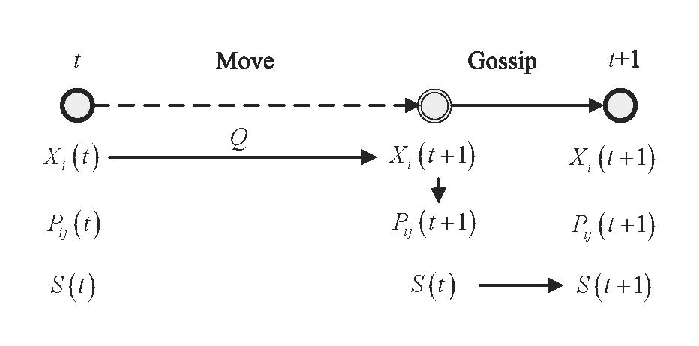}
\caption {Move-and-Gossip Spreading Strategy}
\label{Move-and-Gossip}
\end{figure}

Denote $S\left( t \right) \subset V$ as the set of nodes that have the message, at the \emph{beginning} of time slot $t$. Initially only the source node $s$ has the message, i.e. $S\left( 0 \right) = \{s\}$. A careful check of the move-and-gossip paradigm reveals the following unique features: the node position $X_i(t)$ changes in the middle of each time slot (after the move step), while $S(t)$ is updated at the end (after the gossip step). $P_{ij}(t+1)$ is used to denote the the probability that node $i$ contacts one of its \emph{new} neighbors $j \in {\cal N}_i(t+1)$ in the gossip step of slot $t$; for a natural randomized gossip, it is set as $1/|{\cal N}_i(t+1)|$ for $j \in {\cal N}_i(t+1)$, and $0$ otherwise. For a RGG $G(n,r)$, in the static case, $P_{ij}$ is on the order of $P(n,r)=\Theta\left(\frac{1}{{n\pi {r^2}}}\right)$ when $j \in {\cal N}_i$\cite{RGG-book,Gossip}. In the mobile case, the stochastic matrix $P(t)=\left[P_{ij}(t)\right]_{i,j=1}^{n}$, which collects the contact probabilities over all node pairs, changes over time (in terms of connections) governed by the transition matrix $Q$ of the homogeneous Markov chain $\{\textbf{X}(t)\}$, but the values of non-zero $P_{ij}(t)$'s remain on the order of $P(n,r)$.


Our performance metric is the $\varepsilon$-dissemination time, defined as (where * stands for static or mobile):
\begin{align}\label{spreading-time}
 T_{*} \left( \varepsilon  \right)
 \triangleq \mathop {\sup }\limits_{s \in
V} \inf \left\{ {t:\Pr \left( {\left|S\left( t \right)\right| \ne n\left|
{S\left( 0 \right) = \left\{ s \right\}} \right.} \right) \le
\varepsilon } \right\}.
\end{align}

\section{Mobile Conductance}\label{main}
\subsection{Preliminaries on Static Networks}
We first recall some relevant results in static networks. According to \cite{Gossip},
the static spreading time scales as
\begin{align}\label{static-spreading-time}
{T_{static}}\left( \varepsilon  \right) = O\left( {\frac{{\log n + \log {\varepsilon ^{ - 1}}}}{{{\Phi _s}}}} \right),
\end{align}
where $\Phi _s$ is the static conductance defined as
\begin{align}\label{static-conduct}
\Phi _s = & \mathop {\min
}\limits_{\scriptstyle S \subset V , \left| S \right| \le n/2 \hfill} \left( {\frac{{\sum\limits_{i \in S,j \in \overline S } {P_{ij} } }}{{\left| S \right|}}} \right) \nonumber \\ \overset{\text{(uniform)}}{\doteq} & \mathop {\min
}\limits_{\scriptstyle S \subset V , \left| S \right| \le n/2 \hfill} \left( {\frac{{P\left( n,r \right){N_S}}}{{\left| S \right|}}} \right),
\end{align}
where the second expression holds for the RGG in the order sense, and ${N_S}$ is the number of connecting edges between set $S$ and its complement $\bar S$. Note that ${N_S}$ is a constant for a given set $S$ in the static case, but becomes a random variable in the mobile case when the nodes in $S(t)$ and $\overline{{S}(t)}$ move at each time step.

It has been shown that the conductance for a static random geometric graph scales as $\Theta \left( r \right)$ \cite{Conduct} for $r = \Theta \left( \sqrt {\frac{\log n}{n}}\right)$, and the static spreading time scales as
\begin{align*}
{T_{static}} = O\left( {\frac{{\log n}}{{\sqrt {{{\log n} \mathord{\left/
 {\vphantom {{\log n} n}} \right.
 \kern-\nulldelimiterspace} n}} }}} \right) \approx O\left( {\sqrt n } \right).
\end{align*}

It is worth mentioning that the above result is actually tight in the order sense. The network radius is on the order of $\Theta \left( 1 \right)$, and the distance of one-hop transmission is $\Theta \left( {\sqrt {\frac{{\log n}}{n}} } \right)$. Thus, the minimal number of hops is on the order of $\Theta \left( \sqrt {\frac{n}{{\log n}}}\right) \approx  \Theta \left( {\sqrt n } \right)$. This indicates that the spreading time in the static network scales as $\Theta \left( {\sqrt n } \right)$.

\subsection{Mobile Conductance and Mobile Spreading Time}
Conductance essentially determines the static network bottleneck in information spreading. Intuitively, node movement introduces dynamics into the network structure, thus can facilitate the information flows. In this work we define a new metric, \emph{mobile conductance}, to measure and quantify such improvement.


\emph{Definition:}
The mobile conductance of a stationary Markovian evolving graph with transition distribution $Q$ is defined as:

\begin{align}
 {\Phi _m}\left( Q \right)
&\triangleq \mathop {\min }\limits_{\scriptstyle {S'(t) \subset V}\atop
\scriptstyle {\left| {S'\left( t \right)} \right| \le n/2} }   \left\{  \mathbb{E}_Q \left( {\frac{{\sum\limits_{ i \in S'\left( t \right), j \in \overline {S'\left( t \right)} } {{P_{ij}}\left( {t + 1}\right)} }}{{\left| {S'\left( t \right)} \right|}}} \right) \right\} \label{mobile-conductance}\\ \label{mobile-conductance-approx}
&\overset{\text{(uniform)}}{\doteq} \mathop {\min }\limits_{\scriptstyle {S'(t) \subset V}\atop
\scriptstyle {\left| {S'\left( t \right)} \right| \le n/2} }   \left\{ {\frac{{P\left(n, r \right)}}{{\left| {S'\left( t \right)} \right|}}\mathbb{E}_Q \left[ {{N_{S'}}\left( {t + 1} \right)} \right]} \right\},
\end{align}
where ${S'\left( t \right)}$ is an arbitrary node set with size no larger than $n/2$, and $N_{S'}\left( t+1 \right)$ is the number of connecting edges between ${S'\left( t \right)}$ and $\overline {S'\left( t \right)}$ after the move.


\emph{Remarks:} Some explanations for this concept are in order.

\emph{1)} Similar to its static counterpart, we examine the cut-volume ratio for an arbitrary node set ${S'\left( t \right)}$ (not the message set) at the beginning of time slot $t$. Different from the static case, due to the node motion ($X_i(t)\rightarrow X_i(t+1)$ in Fig. \ref{Move-and-Gossip}), the cut structure (and the corresponding contact probabilities $\{P_{ij}(t)\}$) changes. Thanks to the stationary Markovian assumption, its expected value (conditioned on ${S'\left( t \right)}$) is well defined with respect to the transition distribution $Q$. Minimization over the choice of ${S'\left( t \right)}$ essentially determines the bottleneck of information flow in the mobile setting.

\emph{2)}
As mentioned above, for a RGG $G(n,r)$, the stochastic matrix $P(t)=\left[P_{ij}(t)\right]_{i,j=1}^{n}$ changes over time, but the values of non-zero $P_{ij}(t+1)$'s remain on the same order of $P(n,r)$ given that nodes are uniformly distributed. This allows us to focus on evaluating the number of connecting edges between ${S'\left( t \right)}$ and $\overline {S'\left( t \right)}$ \emph{after} the move: ${N_{S'}}\left( t+1 \right)= \sum_{i \in S'\left( t \right),j \in \overline {S'\left( t \right)} } {{I_{ij}}\left( t+1 \right)}$.\footnote{${I_{ij}}\left( {t + 1} \right) \triangleq \left\{ {\begin{array}{*{20}{c}}
{1,j \in {\cal N}_i\left( {t + 1} \right)}\\
{0,j \notin {\cal N}_i\left( {t + 1} \right)}
\end{array}} \right.$ is the indicator function for the event that node $i$ and $j$ become neighbors after the move and before the gossip step in slot $t$.} Therefore for network graphs where nodes keep uniform  distribution over the time, mobile conductance admits a simpler expression (\ref{mobile-conductance-approx}).

\emph{3)} This definition may naturally be extended to the counterpart of $k$-conductance in \cite{Gossip}, with the set size constraint of $n/2$ in (\ref{mobile-conductance}) replaced by $k$, to facilitate the study of multi-piece information spreading in mobile networks.

Based on the above definition, we can obtain a general result for information spreading in mobile networks as shown below.
\begin{theorem}\label{theorem-mobile-spreading-time}
For a mobile network with mobile conductance $\Phi_m(Q)$,
the mobile spreading time scales as
\begin{equation}\label{mobile-spreading-time}
T_{mobile} \left( \varepsilon, Q  \right) = O\left( {\frac{{\log n + \log
\varepsilon ^{ - 1} }}{{\Phi _m(Q) }}} \right).
\end{equation}
\end{theorem}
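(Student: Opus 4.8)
The plan is to reduce the mobile analysis to the static spreading-time bound \eqref{static-spreading-time} by showing that, under stationarity, a single move-and-gossip slot plays exactly the role of a single static gossip round, with the static conductance $\Phi_s$ replaced by the mobile conductance $\Phi_m$. First I would set up the informed-set process $\{S(t)\}$ as a Markov chain: by the assumed stationarity and ergodicity of $\{{\cal P}_{\bf{X}}\}$, every slot uses the same move kernel $p(\mathbf{X}(t+1)|\mathbf{X}(t))$ and preserves the uniform node distribution, so the per-slot dynamics are statistically identical and may be analyzed on one representative slot.

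The core step is a drift lemma. Conditioned on $S(t)=S'$ with $|S'|\le n/2$, the number of nodes newly informed during the slot equals the number of across-cut contact pairs whose gossip contact actually fires; taking expectation over both the random move and the gossip selection, and treating $P_{ij}$ as the constant $P(r)$ on contact pairs as in \eqref{static-conduct-new}, the expected number of newly informed nodes is $P(r)\,E[N_{S'}(t+1)]$. By the definition of mobile conductance \eqref{mobile-conductance}, this quantity is at least $\Phi_m|S'|$. This is the exact mobile counterpart of the static drift bound, whose corresponding quantity is at least $\Phi_s|S'|$; the minimization over $S'(t)$ in \eqref{mobile-conductance} supplies a worst-case bound uniform in the current informed set.

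With this drift in hand, I would invoke the conductance-based spreading-time argument of \cite{Gossip} essentially as a black box. That argument splits the process into a growing phase ($|S(t)|\le n/2$) and a symmetric shrinking phase ($|\overline{S}(t)|<n/2$): the multiplicative drift forces $|S(t)|$ to cross $n/2$ within $O(\log n/\Phi_m)$ slots, the symmetric bound applied to the uninformed set closes the remaining gap, and an additional $O(\log\varepsilon^{-1}/\Phi_m)$ slots drive the failure probability below $\varepsilon$, yielding \eqref{mobile-spreading-time}.

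The main obstacle is to justify that this reduction is legitimate despite the extra layer of randomness: in the static case the graph is fixed and only the gossip selection is random, whereas here the cut size $N_{S'}(t+1)$ is itself random through the move. I expect the resolution to be that the conductance argument consumes only the one-slot conditional drift together with independence across slots. Since the moves are Markovian and the gossip selections independent, the slot-by-slot bound $\ge \Phi_m|S'|$ is precisely what the supermartingale/concentration step requires, so the concentration that controls the gossip randomness in the static proof extends to the joint move-and-gossip randomness. Verifying this robustness — in particular that replacing the deterministic static cut by its expectation $E[N_{S'}(t+1)]$ does not weaken the high-probability growth guarantee — is the delicate part of the argument.
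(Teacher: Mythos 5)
Your overall strategy is the same as the paper's: establish a one-slot conditional drift bound of order $\Phi_m\left|S(t)\right|$ for the informed set, then invoke the two-phase conductance argument of \cite{Gossip} as a black box with $\Phi_s$ replaced by $\Phi_m$. Your closing discussion of why the extra move randomness is harmless (the argument only consumes the one-slot conditional drift) is also the right resolution and matches what the paper implicitly relies on.

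There is, however, one step in your drift lemma that fails as stated: the number of newly informed nodes does \emph{not} equal the number of across-cut contact pairs that fire, because several informed nodes may push to the same uninformed node $j$ in the same slot, yet $j$ is counted only once. Consequently the expected increment of $\left|S\right|$ is \emph{at most} $P(r)\,E\left[N_{S'}(t+1)\right]$, i.e.\ your claimed equality holds in the wrong direction for a lower bound. The paper repairs exactly this point: it writes the per-node informing probability as
$E\bigl[1-\prod_{i\in S(t)}\left(1-P_{ij}(t+1)\right)\bigr]$
and lower-bounds it via $1-x\le e^{-x}$ and $1-e^{-x}\ge x/2$ (valid since $\sum_i P_{ij}\le 1$), obtaining the drift $\ge \tfrac{1}{2}\Phi_m\left|S(t)\right|$ as in \eqref{set-increase}. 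The lost factor of $2$ is absorbed into the $O(\cdot)$, so your conclusion survives, but your proof needs this inclusion--exclusion (or union-bound--reversal) step inserted where you currently assert equality.
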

\begin{proof}
We follow the standard procedure of the static counterpart (e.g., in \cite{Gossip}), with suitable modifications to account for the difference between static and mobile networks. Starting with $|S(0)|=1$, the message set $S(t)$ monotonically grows through the information spreading process, till the time $|S(t)|=n$ which we want to determine. The main idea is to find a good lower bound on the expected increment $|S(t+1)|-|S(t)|$ at each slot. It turns out that such a lower bound is well determined by the conductance of the network. Since the conductance is defined with respect to sets of size no larger than $n/2$, a two-phase strategy is adopted, where the first phase stops at $|S(t)|\leq n/2$. In the first phase, only the ``push" operation is considered (thus the upper bound on the spreading time is safe); while in the second phase, the emphasis is on the ``pull" aspect of the nodes in $\overline {S\left( t \right)}$ (whose size is no larger than $n/2$). Since the two phases are symmetric, we will only focus on the first one.

In the first phase, for each node $j \in \overline {S\left(t\right)}$, define a random variable $\Delta_j\left(t\right)$. If at least one node with the message moves to the $j$'s neighboring area in slot $t$ and ``pushes" the message to $j$ in the gossip step, one new member is added to the message set. We let $\Delta_j{\left( {t + 1} \right)}=1$ in this case, and $0$ otherwise. In the following, we will evaluate the expected increase $|S(t+1)|-|S(t)|$ conditioned on $S(t)$. The key difference between the static and mobile case is that, there is an additional move step in each slot; therefore, the expectation is evaluated with respect to both the moving and gossiping process. This is where our newly defined metric, mobile conductance, enters the scene and takes place of the static conductance. Specifically, due to the independent actions of nodes in $S(t)$ after the move, we have
%
\begin{align*}
\mathbb{E}\left[ {{\Delta_j}\left( {t + 1} \right)\left| {S\left( t \right)} \right.} \right]
=& \mathbb{E}_Q \left[ {1 - \prod\limits_{i \in S\left( t \right)} {\left( {1 - {P_{ij}}\left( {t + 1} \right)} \right)} } \right] \\
\ge &\mathbb{E}_Q \left[ {1 - \prod\limits_{i \in S\left( t \right)} {\exp \left( { - {P_{ij}}\left( {t + 1} \right)} \right)} } \right] \\
\ge& \frac{1}{2}\mathbb{E}_Q \left[ {\sum\limits_{i \in S\left( t \right)} {{P_{ij}}\left( {t + 1} \right)} } \right],
\end{align*}
where the first and the second inequalities are due to the facts of $1 - x < \exp \left( { - x} \right)$ for $x \ge 0$ and $1 - \exp \left( { - x} \right) \ge \frac{x}{2}$ for $0 \le x \le 1$, respectively. Then
%
\begin{align}\label{set-increase}
&\mathbb{E}\left[ {\left| {S\left( {t + 1} \right)} \right| - \left| {S\left( t \right)} \right|\left| {S\left( t \right)} \right.} \right]
=\sum\limits_{j \in \overline {S\left( t \right)} } {\mathbb{E}\left[ {{\Delta_j}\left( {t + 1} \right)\left| {S\left( t \right)} \right.} \right]}\nonumber \\
\ge &\frac{1}{2}\mathbb{E}_Q \left[ {\sum\limits_{i \in  {S\left( t \right)} ,j \in \overline {S\left( t \right)} } {{P_{ij}}\left( {t + 1} \right)} } \right]\nonumber \\
= &\frac{{\left| {S\left( t \right)} \right|}}{2}  \mathbb{E}_Q \left[ {\frac{{\sum\limits_{ i \in S\left( t \right), j \in \overline {S\left( t \right)} } {{P_{ij}}\left( {t + 1}\right)} }}{{\left| {S\left( t \right)} \right|}}} \right]\nonumber \\
\ge &\frac{{\left| {S\left( t \right)} \right|}}{2}\mathop {\min }\limits_{\scriptstyle S'\left( t \right) \subset V\atop
\scriptstyle \left| {S'\left( t \right)} \right| \le n/2} \left\{ \mathbb{E}_Q \left[ {\frac{{\sum\limits_{ i \in S\left( t \right), j \in \overline {S\left( t \right)} } {{P_{ij}}\left( {t + 1}\right)} }}{{\left| {S\left( t \right)} \right|}}} \right] \right\}\nonumber \\
= &\frac{{\left| {S\left( t \right)} \right|}}{2}{\Phi _m}\left( Q \right).
\end{align}
The form of \eqref{set-increase} is consistent with the counterpart in static networks \cite{Gossip}. Therefore, we can follow the same lines in the rest part of the proof.

%
\end{proof}

\section{Application}\label{app}
The general definition of mobile conductance allows us to separate the details of mobility models from the study of mobile spreading time. In this section, we will evaluate the mobile conductances of several popular mobility models.\footnote{In the following calculation, the resolution parameter $\delta$ in Section \ref{mobility-model} goes to 0.}

We will assume that the network instances follow the RGG model for concreteness, and evaluate (\ref{mobile-conductance-approx}). The main efforts in evaluation lie in finding the bottleneck segmentation (i.e., one that achieves the minimum in (\ref{mobile-conductance-approx})), and determining the expected number of connecting edges between the two resulting sets.
It is known \cite{Conduct} that for a static RGG $G(n,r)$, the bottleneck segmentation
is a bisection of the unit square, when $n$ is sufficiently large.
Intuitively, mobility offers the opportunity to escape from any bottleneck structure of the static network, and hence facilitates the spreading of the information. As will be shown below, fully random mobility destroys such a bottleneck structure, in that ${S'\left( t \right)}$ and $\overline {S'\left( t \right)}$ are fully mixed after the move; this move yields mobile conductance of $\Theta(1)$, a dramatic increase from static conductance  $\Theta \left( r \right)=\Theta \left( {\sqrt {\frac{{\log n}}{n}}} \right)$ \cite{Conduct}. Even for the more realistic velocity constrained model, part of the nodes from ${S'\left( t \right)}$ and $\overline {S'\left( t \right)}$ cross the boundary after the move and the connecting edges between the two sets are increased. The width of this contact region is proportional to $v_{\max}+r$.

\subsection{Evaluation of Several Mobility Models}

\subsubsection{Fully Random Mobility}
\begin{theorem}\label{fully-random}
In fully random mobile networks, the mobile conductance scales as $\Theta \left( 1 \right)$, and the corresponding mobile spreading time scales as $O\left( {\log n} \right)$.
\end{theorem}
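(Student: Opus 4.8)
The plan is to invoke Theorem~\ref{theorem-mobile-spreading-time}, which reduces the spreading-time claim to establishing that the mobile conductance satisfies $\Phi_m = \Theta(1)$; the bound $T_{mobile}(\varepsilon) = O(\log n)$ then follows at once by substituting $\Phi_m = \Theta(1)$ into \eqref{mobile-spreading-time} and treating $\varepsilon$ as a constant so that $\log \varepsilon^{-1}$ is lower order. Hence the whole task is to evaluate the minimization in \eqref{mobile-conductance}. The decisive structural feature of the fully random model is that $\mathbf{X}(t+1)$ is i.i.d.\ uniform and \emph{independent} of $\mathbf{X}(t)$; consequently the expected number of contact pairs after the move cannot depend on the geometric arrangement of $S'(t)$ before the move, but only on its cardinality $s \triangleq |S'(t)|$. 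This collapses the set-valued minimization to a one-dimensional minimization over $s$.

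To carry this out, I would first fix a set $S'(t)$ with $|S'(t)| = s$ and compute $E[N_{S'}(t+1)]$. After the move each node is placed independently and uniformly on the unit square, so for any pair $(i,j)$ with $i \in S'(t)$ and $j \in \overline{S'(t)}$, the probability that $|X_i(t+1) - X_j(t+1)| < r$ equals $\pi r^2$ up to an edge correction; since the fraction of points within distance $r$ of the boundary is $O(r) \to 0$, this probability is $\pi r^2(1 + o(1))$ (exactly the edge effect dismissed in the footnote). Summing the indicator expectations over all $s(n-s)$ cross pairs gives $E[N_{S'}(t+1)] = s(n-s)\pi r^2(1+o(1))$. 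Substituting this together with $P(r) = \Theta(1/(n\pi r^2))$ into the objective of \eqref{mobile-conductance} yields
\begin{align*}
\frac{P(r)}{s}\, E[N_{S'}(t+1)] = \frac{n-s}{n}\,(1+o(1)),
\end{align*}
which, as anticipated, depends on $S'(t)$ only through $s$.

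It then remains to minimize $(n-s)/n$ over $1 \le s \le n/2$. This expression is decreasing in $s$, so the minimum is attained at the largest admissible cut $s = n/2$, giving $\Phi_m = \tfrac{1}{2}(1+o(1)) = \Theta(1)$, consistent with Table~\ref{result-table}. Feeding $\Phi_m = \Theta(1)$ back into Theorem~\ref{theorem-mobile-spreading-time} delivers $T_{mobile}(\varepsilon) = O(\log n)$.

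I do not expect a genuine obstacle in this particular model: the independence of successive configurations is precisely what trivializes the part of conductance analysis that is usually hard, namely the search for the worst-case (bottleneck) cut, because every cut of a given size behaves identically in expectation. The only point requiring a little care is the uniformity of the pairwise contact probability $\pi r^2$ across the square, i.e.\ controlling the boundary correction, which is benign since $r \to 0$. The real role of this theorem is to serve as a best-case benchmark: it exhibits the maximal mobility gain, improving the static $\Theta(\sqrt{n})$ spreading time to $\Theta(\log n)$, which is order-optimal for any gossip scheme since the informed set can at most double per slot.
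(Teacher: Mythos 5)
Your proposal is correct and follows essentially the same route as the paper: both exploit the i.i.d.\ uniform repositioning to get $E[N_{S'}(t+1)] = |S'||\overline{S'}|\pi r^2$ independently of the geometry of $S'(t)$, reduce the minimization to the cut cardinality, obtain $\Phi_m = \frac{1}{2} = \Theta(1)$ at $|S'| = n/2$, and then invoke Theorem~\ref{theorem-mobile-spreading-time}. Your explicit handling of the $(1+o(1))$ boundary correction and the observation that the set-valued minimization collapses to a scalar one are slightly more careful renderings of steps the paper treats implicitly, not a different argument.
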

\begin{proof}
 Since this mobility model is memoryless, for an arbitrary $S'\left( t
\right)$, the nodes in both $S'\left( t \right)$ and $\overline
{S'\left( t \right)}$ are uniformly distributed after the move, with density $|S'\left( t \right)|$ and $|\overline
{S'\left( t \right)}|$ respectively.
For each node in $S'\left( t \right)$, the size of its neighborhood
area is $\pi {r^2}$,
therefore, the expected number of contact pairs
\begin{align}
\mathbb{E}_Q \left[ {{N_{S'}}\left( {t + 1} \right)} \right]
 = \left| {S'\left( t \right)} \right|\left| {\overline {S'\left( t \right)} } \right|\pi {r^2}.
\end{align}

Noting that
\begin{equation*}\label{full-mobility-conductance}
{\frac{{P\left( n, r \right)}}{{\left| {S'\left( t \right)}
\right|}}\mathbb{E}_Q \left[
{{N_{S'}}\left( {t + 1} \right)} \right]}
 = \Theta \left( 1 \right),
\end{equation*}
regardless the choice of $S'(t)$ (with size no larger than $n/2$), we have $\Phi _m=\Theta(1)$. There is no bottleneck segmentation in this mobility model.
\end{proof}

\emph{Remark 1: }In the gossip algorithms, only the nodes with the message can contribute to the increment of $\left| {S\left( t \right)} \right|$. Consider the ideal case that each node with the message contacts a node without message in each step, which represents the fastest possible information spreading. We have the following straightforward arguments:
\begin{align*}
&\left| {S\left( {t + 1} \right)} \right| - \left| {S\left( t \right)} \right| \le \left| {S\left( t \right)} \right|\\
&\Rightarrow \left| {S\left( {t + 1} \right)} \right| \le 2\left| {S\left( t \right)} \right|\\
&\Rightarrow \left| {S\left( t \right)} \right| \le {2^t} = O\left( {{e^t}} \right).
\end{align*}

When $\left| {S\left( T \right)} \right|$ reaches $(1-\epsilon)n$,
the message has largely been spread to the whole network. Therefore,
$T_{mobile}(\epsilon) = \Omega \left( {\log n} \right)$ for
arbitrary constant $\epsilon$, and the optimal performance in
information spreading is achieved in the fully random model.

\emph{Remark 2: }While this model may not be practical, it reveals that the potential improvement on information spreading time due to mobility
is dramatic: from $\Theta \left( {\sqrt n } \right)$ to $\Theta
\left( \log n \right)$.

\subsubsection{Partially Random Mobility}


\begin{theorem}\label{theorem-partial} For the partially random mobility model, where
$k$ out of $n$ nodes are fully mobile, and the rest $n-k$
nodes stay static, the mobile conductance $\Phi _m={\left( {\frac{{n - k}}{n}} \right)^2}{\Phi _s} + \frac{{k\left( {2n - k} \right)}}{{2{n^2}}}$.
\end{theorem}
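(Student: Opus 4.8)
The plan is to decompose the expected number of contact pairs $E[N_{S'}(t+1)]$ according to the static/mobile status of the two endpoints of each pair, and then to carry out the minimization over the segmentation $S'(t)$. First I would classify every potential contact pair $(i,j)$ with $i \in S'(t)$ and $j \in \overline{S'(t)}$ into four types (static--static, static--mobile, mobile--static, mobile--mobile). Because the $k$ mobile nodes are chosen uniformly at random, for any fixed segmentation the mobile nodes occupy an expected fraction $k/n$ of both $S'(t)$ and $\overline{S'(t)}$, so that the mobile counts satisfy $|S'_m| = \frac{k}{n}|S'(t)|$ and $|\overline{S'}_m| = \frac{k}{n}|\overline{S'(t)}|$, with the static counts carrying the complementary factor $\frac{n-k}{n}$.

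Next I would evaluate the expected number of contacts of each type after the move. For any pair with at least one mobile endpoint, the argument from Theorem~\ref{fully-random} applies verbatim: the mobile endpoint relocates to a uniformly distributed position, so the probability that the two nodes fall within distance $r$ is $\pi r^2$ irrespective of the other endpoint's status. Summing over the mobile nodes of $S'(t)$ (which reach all of $\overline{S'(t)}$) and over the static nodes of $S'(t)$ that are reached by mobile nodes of $\overline{S'(t)}$, the total mobile contribution to $E[N_{S'}(t+1)]$ collapses to $|S'(t)||\overline{S'(t)}|\pi r^2\,\frac{k(2n-k)}{n^2}$. For the remaining static--static pairs, denoted $N_{ss}$, neither endpoint moves, so these are exactly a random thinning of the static contact pairs of the underlying graph: a given static contact pair survives precisely when neither of its endpoints was selected as mobile, which happens with probability $\left(\frac{n-k}{n}\right)^2$. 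Hence $E[N_{ss}] \approx \left(\frac{n-k}{n}\right)^2 N_S$, where $N_S$ is the static contact-pair count of the corresponding cut.

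With these two pieces, the objective $\frac{P(r)}{|S'(t)|}E[N_{S'}(t+1)]$ splits into a mobile term $P(r)|\overline{S'(t)}|\pi r^2\,\frac{k(2n-k)}{n^2}$ and a static term $\frac{P(r)E[N_{ss}]}{|S'(t)|}$, and I would argue that both are minimized by the same segmentation. The mobile term depends only on $|\overline{S'(t)}| = n - |S'(t)|$ and is decreasing in $|S'(t)|$, forcing $|S'(t)| = n/2$; the static term, being $\left(\frac{n-k}{n}\right)^2$ times a static-conductance objective, is minimized by the straight-line bottleneck segmentation of \eqref{min-RGG}, which also has $|S'(t)| = n/2$. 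Since the two minimizers coincide, the joint minimum is attained at the straight-line bisection, where the mobile term evaluates to $\frac{k(2n-k)}{2n^2}$ (using $P(r)\pi r^2 = 1/n$) and the static term to $\left(\frac{n-k}{n}\right)^2\Phi_s$, yielding the claimed scaling. As a consistency check, setting $k=0$ recovers $\Phi_m = \Phi_s$ and setting $k=n$ recovers $\Phi_m = \tfrac12$, matching the static and fully random results respectively.

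The main obstacle I anticipate is justifying that the straight-line bisection is the genuine joint minimizer rather than merely the separate minimizer of each term---that is, ruling out an intermediate value of $|S'(t)|$ or an irregular cut shape that trades a small increase in one term for a larger decrease in the other. Closely tied to this is controlling the randomness of the mobile-node selection: the identities $|S'_m| = \frac{k}{n}|S'(t)|$ and $E[N_{ss}] = \left(\frac{n-k}{n}\right)^2 N_S$ hold only in expectation, so I would invoke a concentration argument valid for large $n$ to ensure these fractions hold uniformly over the relevant segmentations with high probability, so that the minimization over $S'(t)$ is not skewed by atypical placements of the mobile nodes.
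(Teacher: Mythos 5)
Your proposal is correct and follows essentially the same route as the paper's proof: the same four-way decomposition of contact pairs by endpoint mobility, the same $\pi r^2$ contact probability for any pair with a mobile endpoint yielding the $\frac{k(2n-k)}{n^2}$ factor, the same $\left(\frac{n-k}{n}\right)^2$ thinning of the static--static pairs, and the same observation that both terms of the objective are minimized simultaneously by the straight-line bisection with $\left|S'(t)\right|=n/2$. The two gaps you flag (justifying the joint minimizer and concentration of the mobile-node fractions) are likewise left implicit in the paper, which simply asserts the two minima coincide.
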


\begin{proof}
For each node that already has the message, say $i$, among all its
neighbors, there are on average $ \left( {n - k} \right)\pi r^2$
static nodes and $k\pi r^2$ mobile nodes. We denote the set of $k$
mobile (dynamic) nodes at time $t$ as $D\left( t \right)$, the set
of $n-k$ static nodes at time $t$ as $\overline {D\left( t \right)}$
and calculate the number of contacted pairs separately as follows.
\begin{align}\label{partial-mobility-conductance}
\mathbb{E}_Q \left[ {{N_{S'}}\left( {t + 1} \right)} \right]
= \mathbb{E}_Q \left[ {\begin{array}{*{20}{l}}
{\sum\limits_{i \in S'\left( t \right) \cap \overline {D\left( t \right)} ,j \in \overline {S'\left( t \right)}  \cap \overline {D\left( t \right)} } {{I_{ij}}\left( {t + 1} \right)} }\\
 + \sum\limits_{i \in S'\left( t \right) \cap D\left( t \right),j \in  \overline {S'\left( t \right)}  \cap D\left( t \right)} {{I_{ij}}\left( {t + 1} \right)}\\
{ + \sum\limits_{i \in S'\left( t \right) \cap D\left( t \right),j \in \cap \overline {S'\left( t \right)}  \cap \overline {D\left( t \right)} } {{I_{ij}}\left( {t + 1} \right)} }\\
 + \sum\limits_{i \in S'\left( t \right) \cap \overline {D\left( t \right)} ,j \in \overline {S'\left( t \right)}  \cap D\left( t \right)} {{I_{ij}}\left( {t + 1} \right)}
\end{array}} \right],
\end{align}
where the former two terms are the number of contact pairs \emph{within}
static nodes and mobile nodes, respectively, while the latter two terms are those \emph{between} static and mobile nodes.

The links within the static nodes remain unchanged after the move,
therefore $I_{ij} \left( {t + 1} \right) = I_{ij}$
for ${i \in S'\left( t \right) \cap \overline {D\left( t \right)},j \in
\overline {S'\left( t \right)}  \cap \overline {D\left( t \right)}}$. Since the
$k$ mobile nodes are fully random, the links involving the mobile
nodes (the last three terms) can be estimated similarly as in the fully random model. Putting together (with some reorganization), we have
\begin{align}
\mathbb{E}_Q \left[ {{N_{S'}}\left( {t + 1} \right)} \right]
&=
\mathbb{E}_Q \left[{\sum\limits_{\scriptstyle i \in S'\left( t \right) \cap \overline
{D\left( t \right)} ,\atop \scriptstyle j \in \overline {S'\left( t \right)}  \cap \overline {D\left(
t \right)}} {{I_{ij}}} }\right] \nonumber \\
& + \left(
\begin{array}{l}
\frac{{n - k}}{n}\left| {\overline {S'\left( t \right)} } \right|n\pi {r^2}\frac{{\left| {S'\left( t \right)} \right|}}{n}\frac{k}{n}\\
 + \frac{{n - k}}{n}\left| {S'\left( t \right)} \right|n\pi {r^2}\frac{{\left| {\overline {S'\left( t \right)} } \right|}}{n}\frac{k}{n}\\
 + \frac{k}{n}\left| {S'\left( t \right)} \right|n\pi {r^2}\frac{{\left| {\overline {S'\left( t \right)} } \right|}}{n}\frac{k}{n}
\end{array} \right).
\end{align}

According to the definition of mobile conductance,
\begin{align}\label{partial-mobility-conductance2}
{\Phi _m}
= & \mathop {\min }\limits_{\scriptstyle S'\left( t \right) \subset V ,\atop \scriptstyle \left| {S'\left( t \right)} \right| \le n/2}\left\{ {\frac{{P\left( n, r \right)}}{{\left| {S'\left( t \right)} \right|}}\sum\limits_{\scriptstyle i \in S'\left( t \right) \cap \overline {D\left( t \right)} ,\atop
\scriptstyle j \in \overline {S'\left( t \right)}  \cap \overline {D\left( t \right)}} {{I_{ij}}} } \right\} \nonumber \\
 +& \mathop {\min }\limits_{\scriptstyle S'\left( t \right) \subset V ,\atop \scriptstyle \left| {S'\left( t \right)} \right| \le n/2}\left\{ {2\frac{{n - k}}{n}\frac{{\left| {\overline {S'\left( t \right)} } \right|}}{n}\frac{k}{n} + \frac{k}{n}\frac{{\left| {\overline {S'\left( t \right)} } \right|}}{n}\frac{k}{n}} \right\}\nonumber \\
= & {\left( {\frac{{n - k}}{n}} \right)^2}{\Phi _s} + \frac{{k\left( {2n - k} \right)}}{{2{n^2}}}.
\end{align}

Note that the two minima are achieved simultaneously when $\left| {S'\left( t \right)} \right| = \left| {\overline {S'\left( t \right)} } \right| = \frac{n}{2}$.
\end{proof}
\emph{Remarks:} Since ${\Phi _s} = \Theta \left( {\sqrt {\log n / n}
} \right)$ and $\frac{1}{2}\frac{k}{n} < \frac{{k\left( {2n - k}
\right)}}{{2{n^2}}} < \frac{k}{n}$, the number of mobile nodes needs
to achieve $\omega \left( {\sqrt {n\log n} } \right)$ in order to
bring significant benefit over the static one. Partially random mobility
model is a mixture of the static network and fully random mobile network.
It can be seen that as $k$ grows, the mobile conductance increases:
as $k \to \Theta \left( n \right)$, $\Phi _m  \to \Theta \left( 1
\right)$.

\subsubsection{Velocity Constrained Mobility}
\begin{theorem}\label{theorem-velocity}
For the mobility model with velocity constraint $v_{max}$, the mobile conductance scales as $\Theta \left( \max \left( v_{\max},r \right) \right) $.
\end{theorem}



\begin{proof}
According to \emph{Lemma \ref{velocity-segmentation}} in Appendix
\ref{Lemma1}, the bottleneck segmentation between
${S'\left(t\right)}$ and $\overline {S'\left( t \right)}$ is the
straight line bisection of the unit square, and the density of nodes before and after
move is illustrated in Fig.~\ref{velocity}. For better illustration, darkness of the regions represents the density of nodes that belong to $S'\left( t \right)$. Before the move, the nodes in
$S'\left( t \right)$ and $\overline {S'\left( t \right)}$ are
strictly separated by a straight line border\footnote{Note that we have the flexibility to choose a network cut according to the definition of mobile conductance. }. After the move, with
some nodes in both $S'\left( t \right)$ and $\overline {S'\left( t
\right)}$ crossing the border to enter the other half, a mixture
strip of width $2 \times v_{\max }$ emerges in the middle of the
graph.

We take the center of the graph as the origin. Denote $\rho _{S'\left( t \right)} \left( l \right)$ and $\rho
_{\overline {S'\left( t \right)} } \left( l \right)$ as the density
of nodes before moving, and $\rho
'_{S'\left( t \right)} \left( l \right)$ and $\rho '_{\overline
{S'\left( t \right)} } \left( l \right)$ as the density of nodes
after moving, with $l$ the horizontal coordinate. As shown in
the upper subfigure of Fig.~\ref{velocity}, at time $t$, the nodes
in the circle of radius $v_{\max }$ have equal probabilities to move
to the center point at time slot $t+1$. Therefore, $\rho '_{S'\left(
t \right)} \left( l \right)$ is given by the proportion of
the dark area in the circle (thus is
uniform over the vertical line $x=l$).
$\rho '_{\overline {S'\left( t \right)} } \left( l
\right)$ can be obtained similarly.


After some derivation, we have
{\footnotesize{\begin{equation*}
{\frac{{{{\rho '}_{S'}}\left( l \right)}}{n} = \left\{ {\begin{array}{{ll}}
{1,} & {l  \leq   - {v_{\max }}},\\
\begin{array}{l}
\frac 1 \pi \arccos \left( {\frac{l}{{{v_{\max }}}}} \right)\\
 - \frac{l}{{\pi {v_{\max }}}}\sin \left( {\arccos \frac{l}{{{v_{\max }}}}} \right),
\end{array} & {- {v_{\max }} < l < {v_{\max }}},\\
{0,} & {l  \geq  {v_{\max }}},
\end{array}} \right.}\\
\end{equation*}}}

and
\begin{equation*}
{\frac{{{{\rho '}_{\overline {S'}}}\left( l \right)}}{n} = 1 - \frac{{{\rho '}_{S'}}\left( l \right)}{n}}.
\end{equation*}

\begin{figure}[h] \centering
\includegraphics[width=0.35\textwidth]{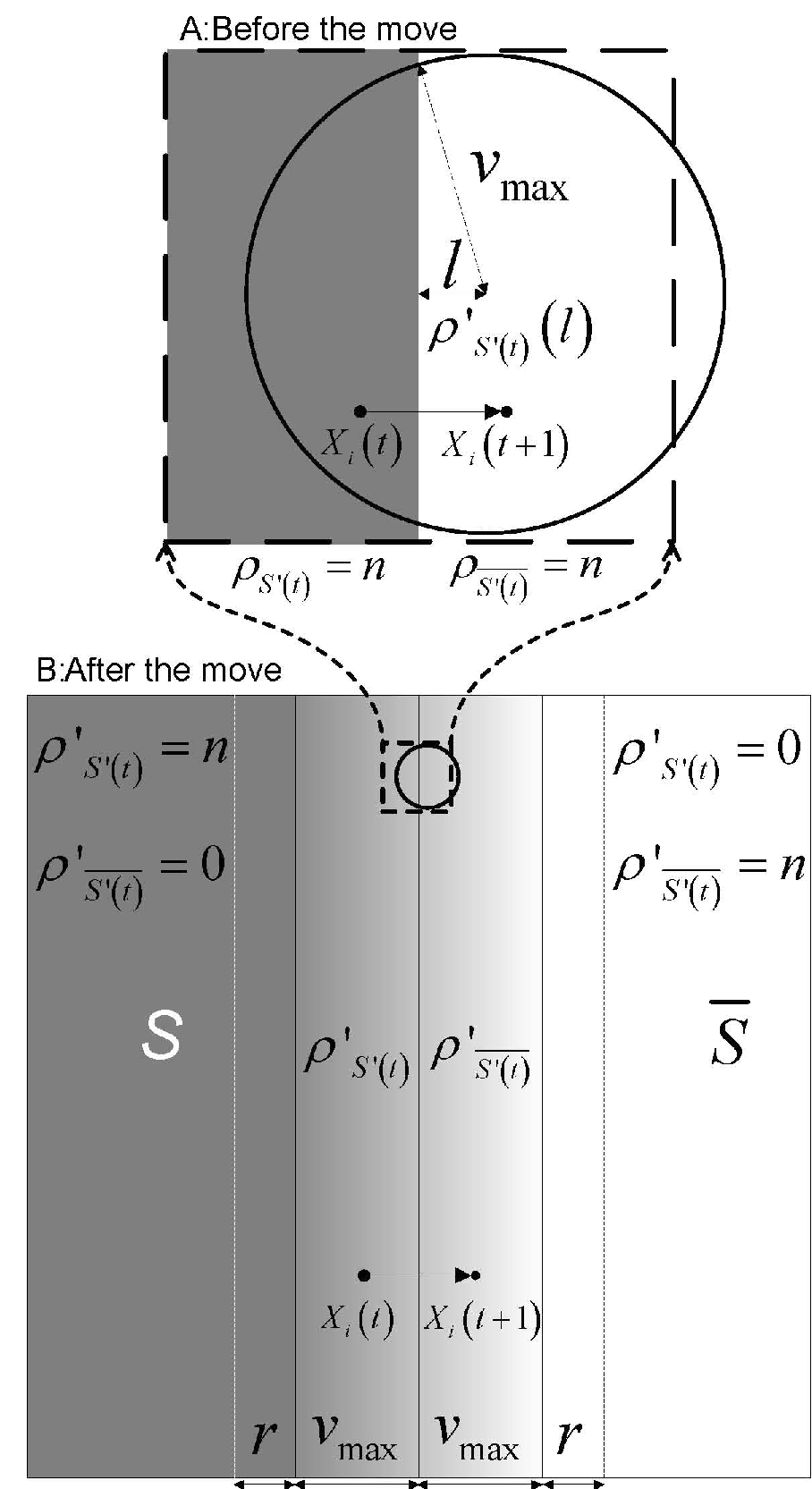}
\caption {Velocity Constrained Mobility} \label{velocity}
\end{figure}

\begin{figure}[h] \centering
\includegraphics[width=0.40\textwidth]{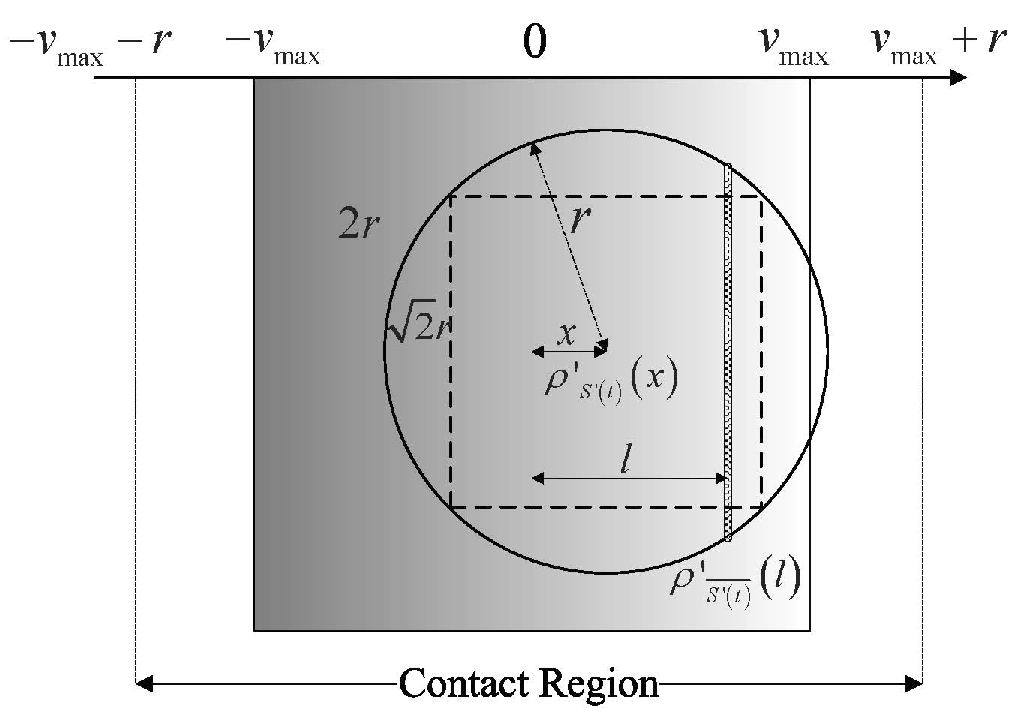}
\caption {Calculating the Number of Contact Pairs in Velocity Constrained Mobility} \label{velocity3}
\end{figure}

The contact region with the above bottleneck segmentation is the $2
\times \left( {v_{\max }  + r} \right)$ wide vertical strip in the center. All nodes outside this region will not contribute
to $N_{S'} \left( {t + 1} \right)$.

The number of contact pairs after the move can be calculated according to Fig.~\ref{velocity3}. The center of the circle with radius $r$ is $x$-distance away from the middle line. For node $i\in S'(t)$ located at the
center, the number of nodes that it can contact is equal to
the number of nodes belonging to $\overline{S'(t)}$ in the circle. Since the density of nodes belonging to $\overline{S'(t)}$ at positions $l$ away from the middle line
is ${\rho '_{\overline {S'} } \left( l \right)}$, the number of
nodes that $i$ can `push' information to is $\int\limits_{x - r}^{x
+ r} {\rho '_{\overline {S'} } \left( l \right) 2 \sqrt {r^2  -
\left( {l - x} \right)^2 } dl}$. Taking all nodes with message in
the contact region into consideration, the expected number of
contact pairs after the move is
\begin{align}\label{calculus}
& \mathbb{E}_Q \left[ {{N_{S'}}\left( {t + 1} \right)} \right]  \nonumber \\
  = & \int\limits_{ - v_{\max }  - r}^{v_{\max }  + r} {\rho '_{S'} \left( x \right)\int\limits_{x - r}^{x + r} {\rho '_{\overline {S'}} \left( l \right)2\sqrt {r^2  - \left( {l - x} \right)^2 } dldx} }.
 \end{align}

Since $S'\left( t \right)$ and $\overline {S'\left( t \right)}$ here
is the bottleneck segmentation that minimize the conductance, the
mobile conductance is ${\Phi _m}\left( Q \right)
= \frac{2}{{{n^2}\pi {r^2}}} \mathbb{E}_Q \left[ {{N_{S'}}\left( {t + 1} \right)} \right]$. According to the calculation in Appendix \ref{Evaluation}, we can obtain the
results in \emph{Theorem \ref{theorem-velocity}}.
\end{proof}

\emph{Remarks:} \emph{Theorem \ref{theorem-velocity}} indicates that, when ${v_{\max }=O(r)}$, $\Phi _m = \Theta \left(
{r } \right)$, and the spreading time scales as $O(\log n/r)$, which degrades to the static case;
when ${v_{\max }=\omega(r)}$, $\Phi _m = \Theta \left(
v_{\max } \right)$, and the spreading time scales as $O(\log n/ v_{\max })$, which improves over the static case and approaches the optimum when $v_{\max }$ approaches $\Theta(1)$. These observations are further verified through the simulation results below.



\subsubsection{One-dimensional Mobility}
%
\begin{theorem}\label{theorem-onedim} For the one-dimensional area constrained
mobility model, where among the $n$ nodes, $n_V$ nodes only move
vertically and $n_H$ nodes only move horizontally, The mobile
conductance $\Phi _m=\frac{{{n_V^2+n_H^2}}}{n^2}{\Phi _s} +  {\frac{{{n_V}{n_H}}}{{{n^2}}}} $.
\end{theorem}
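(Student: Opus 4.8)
The plan is to mirror the decomposition used in the partially random proof, but with the two node families now being the $n_V$ vertically moving nodes and the $n_H$ horizontally moving nodes. For an arbitrary $S'(t)$ I would split the expected contact count after the move into four groups, $E[N_{S'}(t+1)] = E[N_{VV}] + E[N_{HH}] + E[N_{VH}] + E[N_{HV}]$, collecting respectively the V--V, H--H, and the two cross-type contributions. The crucial structural observation is that the bottleneck segmentation here is \emph{not} a single straight line: a V-node keeps its horizontal coordinate, so it never crosses a \emph{vertical} border, whereas an H-node keeps its vertical coordinate and never crosses a \emph{horizontal} border. Hence I would propose as the candidate minimizer the \emph{composite} cut that separates the V-nodes by the vertical bisector and the H-nodes by the horizontal bisector; this places $n_V/2$ V-nodes and $n_H/2$ H-nodes into $S'(t)$, so $|S'(t)| = n/2$ exactly, and it is the only way to keep \emph{both} within-type families from mixing across their respective boundaries.

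With this cut fixed, each term is a short computation. For the V--V term the horizontal coordinates are frozen on either side of the vertical line while the vertical coordinates randomize, so a left/right V-pair at horizontal separation $\Delta < r$ meets with probability $2\sqrt{r^2-\Delta^2}$; integrating against the V-node density $n_V$ on each side reduces to the same one-dimensional bottleneck integral that produces $\Phi_s$, and after normalizing by $P(r) = \Theta(1/(n\pi r^2))$ and $|S'(t)| = n/2$ it yields $\frac{n_V^2}{n^2}\Phi_s$ (using $\Phi_s \approx \frac{4r(1-r)}{3\pi}$ from \eqref{min-Conduct}). The H--H term is identical by symmetry, contributing $\frac{n_H^2}{n^2}\Phi_s$. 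The cross terms are where mobility pays off: for a V-node $i \in S'(t)$ and an H-node $j \in \overline{S'(t)}$, the move makes $y_i(t+1)$ and $x_j(t+1)$ independent uniforms, so the point $(x_j(t+1),y_i(t+1))$ is uniform on the unit square and the pair meets with probability exactly $\pi r^2$ --- precisely the fully random behavior. Summing $E[N_{VH}] + E[N_{HV}] = 2\cdot\frac{n_V}{2}\frac{n_H}{2}\pi r^2$ and normalizing gives $\frac{n_V n_H}{n^2}$, and adding the three pieces reproduces $\frac{n_V^2+n_H^2}{n^2}\Phi_s + \frac{n_V n_H}{n^2}$.

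The part I expect to be genuinely delicate --- and which I would isolate into a lemma in the spirit of the velocity-constrained case --- is proving that the composite cut actually \emph{minimizes} the mobile conductance rather than merely evaluating it. Two things must be shown. First, for each node type the straight-line split is optimal: after the vertical randomization the V--V interaction depends only on the horizontal coordinates through the kernel $2\sqrt{r^2-\Delta^2}$, so minimizing V--V crossings is a one-dimensional min-conductance problem whose optimizer is a contiguous (straight-line) partition, exactly as for the static graph, and the same argument on the vertical axis handles H--H. Second, these two optimizations, together with the cross term, must be minimized \emph{simultaneously}: the within-type minimizations pin $s_V = n_V/2$ and $s_H = n_H/2$, and one has to verify (as in the partial-random proof, where both minima land at $|S'(t)| = |\overline{S'(t)}| = n/2$) that no unbalanced or mixed choice of $S'(t)$ beats this point once the $1/|S'(t)|$ normalization is accounted for. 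The one competing strategy worth checking explicitly is loading $S'(t)$ with a single type, which zeroes a within-type term but forces the full cross contribution of order $n_H/n \gg \frac{n_V n_H}{n^2}$; so for large $n$ the balanced composite cut wins, confirming it as the bottleneck.
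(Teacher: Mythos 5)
Your proposal is correct and follows essentially the same route as the paper: the same four-way V--V/H--H/V--H/H--V decomposition, the same composite bottleneck cut (vertical bisector for V-nodes, horizontal bisector for H-nodes), the same $\int_{-r}^{r}2\sqrt{r^2-x^2}\,dx=\pi r^2$ position-independence argument for the cross terms, and the same final assembly. If anything, you are more explicit than the paper about verifying that the composite cut is a genuine minimizer (e.g.\ ruling out the single-type loading of $S'(t)$, whose normalized cross contribution is $\Theta(n_H/n)\gg n_Vn_H/n^2$), a point the paper passes over by asserting that the two within-type minima and the cross-term minimum are achieved simultaneously at $|S'(t)|=n/2$.
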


\begin{proof} Denote the subset of V-nodes as
$S_V$, and the subset of H-nodes as $S_H$. Similar to the partially random mobility case, the calculation of the expected number of contact pairs is decomposed into four groups as follows.
\begin{align}\label{One-dim-Contact-pairs}
&\mathbb{E}_Q \left[ {{N_{S'}}\left( {t + 1} \right)} \right] \nonumber\\
= &\mathbb{E}_Q \left[ {\begin{array}{*{20}{l}}
{\sum\limits_{i \in {S_V} \cap S'\left( t \right),j \in {S_V} \cap \overline {S'\left( t \right)} } {{I_{ij}}\left( {t + 1} \right)} }\\
{ + \sum\limits_{i \in {S_H} \cap S'\left( t \right),j \in {S_H} \cap \overline {S'\left( t \right)} } {{I_{ij}}\left( {t + 1} \right)} }\\
{ + \sum\limits_{i \in {S_V} \cap S'\left( t \right),j \in {S_H} \cap \overline {S'\left( t \right)} } {{I_{ij}}\left( {t + 1} \right)} }\\
{ + \sum\limits_{i \in {S_H} \cap S'\left( t \right),j \in {S_V} \cap \overline {S'\left( t \right)} } {{I_{ij}}\left( {t + 1} \right)} }
\end{array}} \right].
\end{align}

Consider the first term, the expected number of contact pairs within V-nodes. Because all nodes in this case follow a one-dimensional ``fully
random" mobility model on their corresponding vertical paths, this number remains unchanged after the move. Therefore, the bottleneck
segmentation is the same as in the static
case, i.e. letting all V-nodes on the left half belong to
$S'\left( t \right)$ and those on the right half belong to
$\overline {S'\left( t \right)}$. However, the densities of the
V-nodes on both halves are $n_V$, instead of $n$. With respect to this bottleneck segmentation, the first term of \eqref{One-dim-Contact-pairs}, translated into \eqref{mobile-conductance}, gives $\left( {\frac{{{n_V}}}{n}} \right)^2 {\Phi _s}$.
Analogously, the bottleneck segmentation for the second term is formed by letting all H-nodes on the upper half belong to $S'\left( t \right)$ and those on the bottom half belong to $\overline {S'\left( t \right)}$, which contributes $\left( {\frac{{{n_H}}}{n}} \right)^2{\Phi _s}$ to the mobile conductance. These two bottleneck segmentations can be combined.

\begin{figure}[h] \centering
\includegraphics[width=0.35\textwidth]{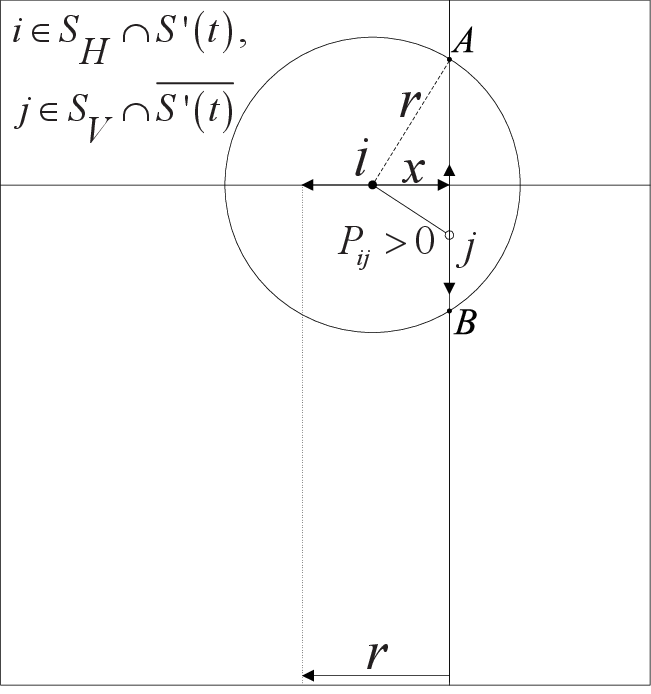}
\caption {Contact Probability of V-node and H-node in One-dimensional Area Constrained Mobility} \label{One-dim}
\end{figure}

Now we move on to the latter two terms of \eqref{One-dim-Contact-pairs}.
One key observation is that the contact probability between one V-node and one
H-node is independent of the positions of their paths in the unit square. To see this, let us check
Fig.~\ref{One-dim}: for any $i  \in S_H  \cap S'\left(
t \right)$ located $x$ away from the vertical path of $j \in S_V
\cap \overline {S'\left( t \right)}$, the probability that $(i,j)$
is a contact pair is the proportion of the chord length $|AB|$ over
the unit side length. Taking the integral over all $i$'s possible
positions on the horizontal path, their contact probability ${p_{_{H - V}}}$ is given by
\begin{equation*}
{p_{_{H - V}}} = \int_{ - r}^r {2\sqrt {{r^2} - {x^2}} } dx = \pi r^2.
\end{equation*}

Similarly, the contact probability between any $i \in S_V  \cap S'\left( t \right)$ and $j \in S_H  \cap \overline {S'\left( t \right)}$, ${p_{_{V - H}}}$ is also $\pi r^2$. Thus, the latter two terms can be evaluated as ${p_{_{V - H}}}\left| {{S_V} \cap S'\left( t \right)} \right|\left| {{S_H} \cap \overline {S'\left( t \right)} }\right|$ and ${p_{_{H - V}}}\left| {{S_H} \cap S'\left( t \right)} \right|\left| {{S_V} \cap \overline {S'\left( t \right)} } \right|$, respectively,
which are both independent of the segmentation before the move.

%

To sum up, the mobile conductance for the one-dimensional mobility model is:
\begin{align}\label{one-dim-conduct}
&{\Phi _m}\nonumber ={\left( {\frac{{{n_V}}}{n}} \right)^2}{\Phi _s} + {\left( {\frac{{{n_H}}}{n}} \right)^2}{\Phi _s}\nonumber \\
&{\footnotesize{+ \mathop {\min }\limits_{\scriptstyle S'\left( t \right) \subset V\atop \scriptstyle \left| {S'\left( t \right)} \right| < n/2} \left\{ {\frac{{P\left( n, r \right)}}{{\left| {S'\left( t \right)} \right|}}\left( \begin{array}{l}
{p_{_{H - V}}}\left| {S'\left( t \right)} \right|\frac{{{n_H}}}{n}\left| {\overline {S'\left( t \right)} } \right|\frac{{{n_V}}}{n}\\
 + {p_{_{V - H}}}\left| {S'\left( t \right)} \right|\frac{{{n_V}}}{n}\left| {\overline {S'\left( t \right)} } \right|\frac{{{n_H}}}{n}
\end{array} \right)} \right\}}}\nonumber \\
&= \frac{{{n_V^2+n_H^2}}}{n^2}{\Phi _s} + \mathop {\min }\limits_{\scriptstyle S'\left( t \right) \subset V \atop
\scriptstyle \left| {S'\left( t \right)} \right| < n/2} \left\{ {\frac{{2{n_V}{n_H}\left| {\overline {S'\left( t \right)} } \right|}}{{{n^3}}}} \right\}\nonumber \\
&= \frac{{{n_V^2+n_H^2}}}{n^2}{\Phi _s} +  {\frac{{{n_V}{n_H}}}{{{n^2}}}}.
\end{align}
\end{proof}

\emph{Remarks:} We can see that, when all nodes move in one
direction, the mobile conductance is the same as the static
case. On the contrary, when half (or a constant proportion) of the nodes are V-nodes and the other half are H-nodes, the mobile conductance achieves its maximum of $\Theta \left( 1 \right)$, the same order as in the fully random mobility model. The implication is that multidirectional movement spreads information faster than unidirectional movement.

\subsubsection{Two-dimensional Mobility}

\begin{theorem}\label{theorem-twodim} For the two-dimensional area-constrained mobility model with mobility capacity $r_c$, the mobile conductance scales as  $\Theta \left( \max \left( r_c,r \right) \right) $.
\end{theorem}

\begin{proof} Denote by $H_S \triangleq \left\{ i_h, i \in S'\left(t\right) \right\}$ the set of home points for $S'\left(t\right)$, and $H_{\overline S }\triangleq\left\{ i_h, i \in \overline {S'\left(t\right)} \right\}$ the set of home points for $\overline {S'\left(t\right)}$. Let $X_{i_h}$ and $X_{j_h}$ denote the positions of home points $i_h$ and $j_h$, then $i$ and $j$ can possibly move to positions within a distance of $r$ only if their home points are within a distance of $2r_c+r$, i.e., $\mathbb{E}_Q \left[ {{I_{ij}}\left( {t + 1} \right)} \right] > 0$ only if $\left| {X_{i_h}- X_{j_h}} \right| < 2r_c+r$.
This is similar to the velocity constrained mobility model, except that the node's position before the move $X_i(t)$ is replaced by the position of its home point $X_{i_h}$, and $v_{\max}$ replaced by $r_c$.

We now show that the two-dimensional area-constrained mobile conductance can be obtained similarly to the velocity constrained mobile conductance.

\begin{enumerate}
\item Here the node positions of $S'\left( t \right)$ and $\overline {S'\left( t \right)}$ after the move are not conditioned on their positions before the move, but determined by the positions of their home points. When calculating the expected number of contact pairs, $H_S$ and $H_{\overline S }$ play the same roles as $S'\left( t \right)$ and $\overline {S'\left( t \right)}$ before the move, respectively.

\item The mobility capacity $r_c$ has the same effect on information spreading as the maximal velocity $v_{\max}$ in the velocity constrained model, both of which set a limit on the nodes' moving ability.

\end{enumerate}

Instead of finding the bottleneck segmentation between $S'\left( t \right)$ and $\overline {S'\left( t \right)}$ before the move as in the velocity constrained model, we need to find the bottleneck segmentation between the home points: $H_S$ and $H_{\overline S }$. Since the home points also form a random geometric graph, it can be shown that the bottleneck segmentation is formed by dividing the home points into two halves using a straight vertical line bisecting the unit square, as illustrated in Fig.~\ref{Two-dim2}.

\begin{figure}[h] \centering
\includegraphics[width=0.35\textwidth]{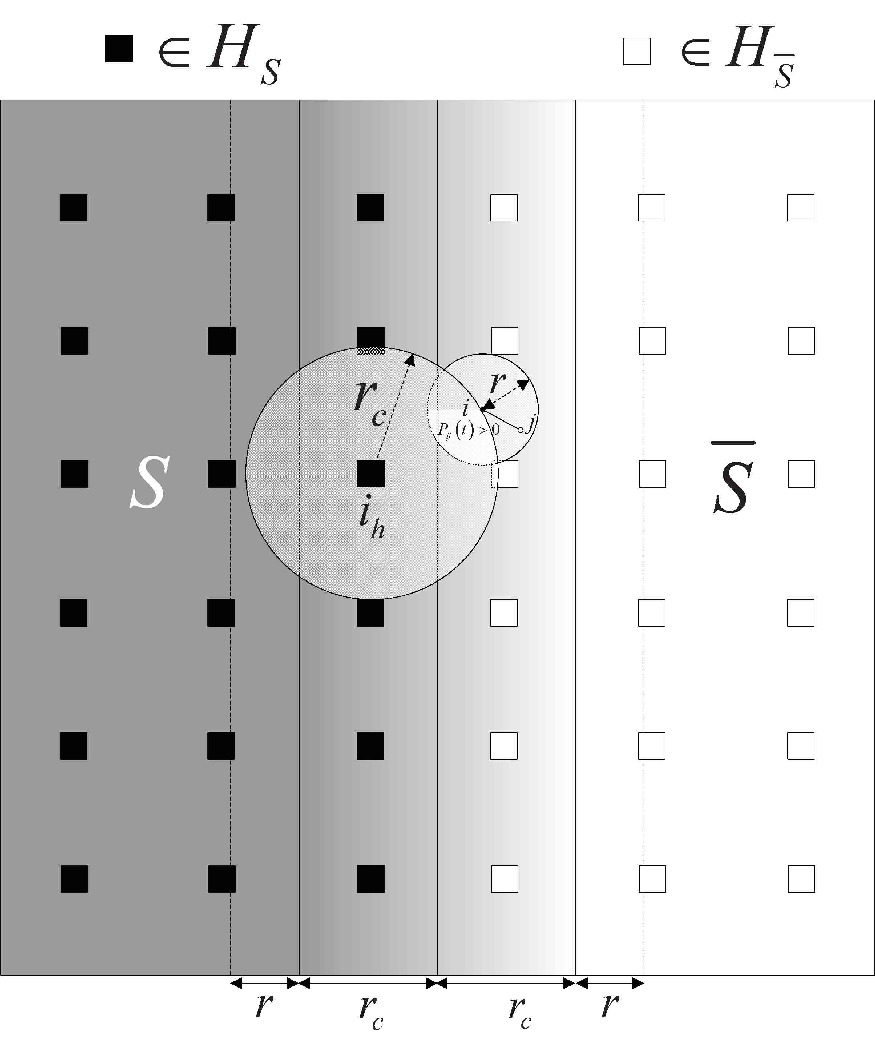}
\caption {Two-dimensional Area Constrained Mobility}
\label{Two-dim2}
\end{figure}

Therefore, we may follow the same line of evaluating the velocity constrained mobile conductance to obtain the two-dimensional area-constrained mobile conductance. The only difference is that $v_{max}$ in \eqref{Velocity-conduct} is replaced with the mobility capacity $r_c$, which leads to the final results in \emph{Theorem \ref{theorem-twodim}}.
\end{proof}

\emph{Remarks:} When the mobility capacity is much greater than the transmission radius, the mobile conductance is dominated by the mobility capacity, i.e.
$\Phi _m = \Theta \left( {r_c } \right)$. When the mobility capacity is much smaller than the transmission radius, the mobile conductance is dominated by the transmission radius, i.e. $\Phi _m = \Theta \left( {r } \right)$, as in static networks. The similarity between this model and the velocity constrained mobility model is worth noting.

\subsection{Simulation Results}
We have conducted large-scale simulations to verify the correctness and accuracy of the derived theoretical results. In our simulation, $n$ (up to 20,000) nodes are randomly deployed on a unit square and move according to certain mobility models, as described in Section \ref{problem-formulation}. The transmission radius $r\left( n \right)$ is set as $\sqrt {\frac{{{C_0}\log n}}{n}}$ with ${C_0} = \frac{8}{\pi }$ \cite{connectivity-constant2}. The spreading time is measured by the number of time slots. For each curve, we simulate one thousand Monte-Carlo rounds and present the average.

The spreading time results for static networks and fully random mobile networks are shown in Fig.~\ref{Tspr_PartialScaling} to Fig.~\ref{Tspr_TwoDimScaling} as the upper and lower bounds. We observe that the spreading time in mobile networks is significantly reduced, and as network size $n$ grows, the static spreading time increases much faster than the mobile counterpart. The bottommost curve (fully random mobility) grows in a trend of $\log n$ (note that the x-axis is on the log-scale), which confirms \emph{Theorem 2}.

Fig.~\ref{Tspr_PartialScaling} further confirms our remarks on \emph{Theorem \ref{theorem-partial}}. When the proportion of mobile nodes is a constant ($0.1$), the corresponding curve exhibits a slope almost identical to that for the fully random model. We also observe that $k =  \Theta \left( \sqrt{n\log n} \right)$ is a breaking point, below which ($k = \Theta \left( \sqrt{\log n} \right)$) the performance degrades to the static case.

Fig.~\ref{Tspr_VelocityScaling} confirms our remarks on \emph{Theorem \ref{theorem-velocity}}. When $v_{\max}=0.1$, the corresponding curve exhibits a slope almost identical to that for the fully random model. We also observe that $v_{\max}=\Theta \left( r \right)$ is a breaking point: velocity that is lower ($v_{\max}=o(r)=\Theta \left(\sqrt{\frac{1}{n}}\right)$) leads to a performance similar to the static case.

The spreading time results for the one-dimensional area constrained mobility model is shown in Fig.~\ref{Tspr_OneDimScaling}, which exhibit slopes almost identical to that for the fully random model. It is also shown that when half of the nodes are V-nodes and the other half are H-nodes, the best performance is achieved.

Fig.~\ref{Tspr_TwoDimScaling} confirms our remarks on \emph{Theorem \ref{theorem-twodim}}. When $r_c=0.1$, the corresponding curve exhibits a slope almost identical to that for the fully random model. We also observe that $r_c=\Theta \left( r \right)$ is a breaking point, a mobility capacity ($r_c=\Theta \left( \sqrt{\frac{1}{n}} \right)$) below which leads to a performance similar to the static case. Also note the similarity between this figure and Fig.~\ref{Tspr_VelocityScaling}.

\begin{figure}[h] \centering
\includegraphics[width=0.5\textwidth]{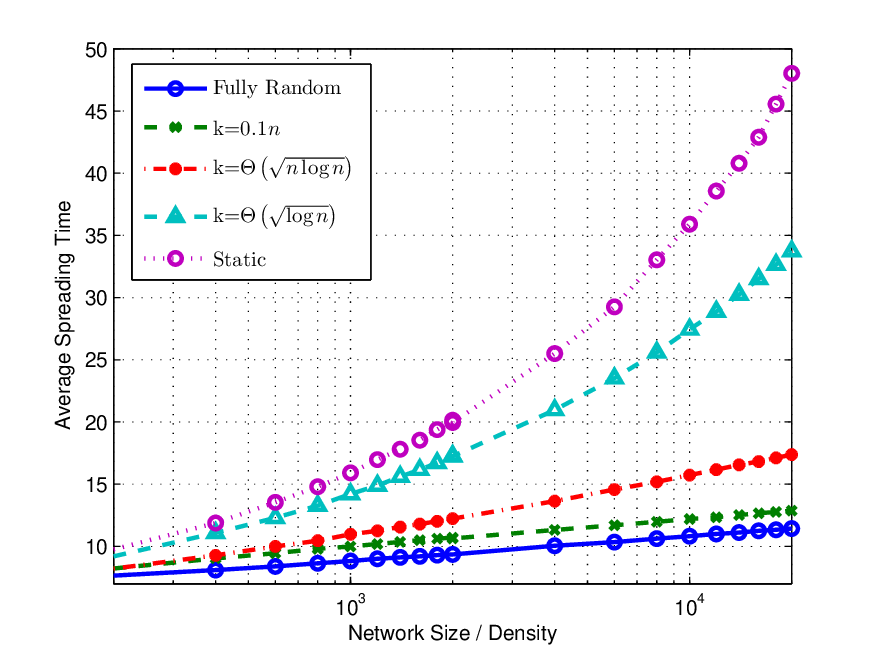}
\caption {Average Spreading Time under the Partially Random Mobility Model}
\label{Tspr_PartialScaling}
\end{figure}

\begin{figure}[h] \centering
\includegraphics[width=0.5\textwidth]{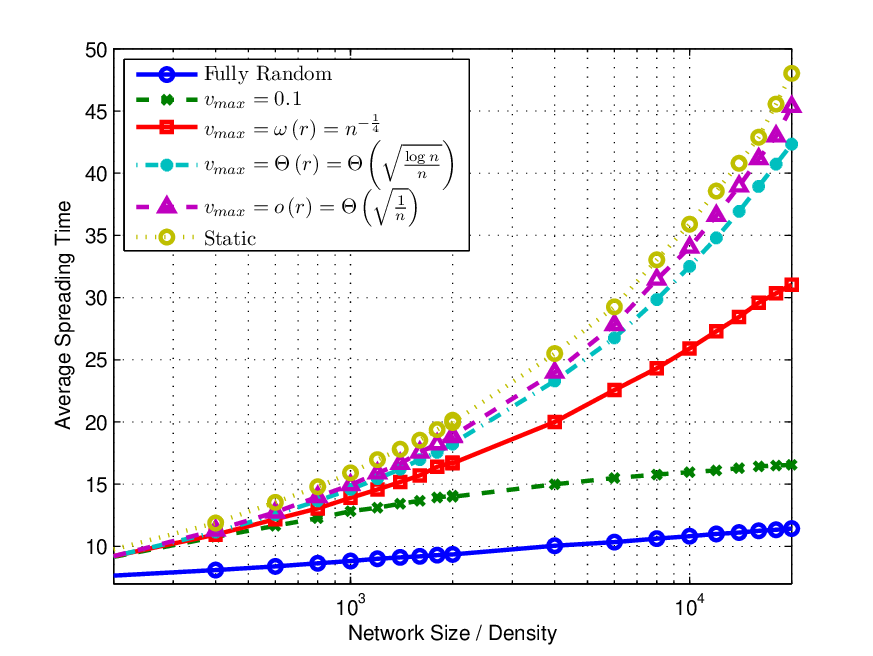}
\caption {Average Spreading Time under the Velocity-Constrained Mobility Model}
\label{Tspr_VelocityScaling}
\end{figure}

\begin{figure}[h] \centering
\includegraphics[width=0.5\textwidth]{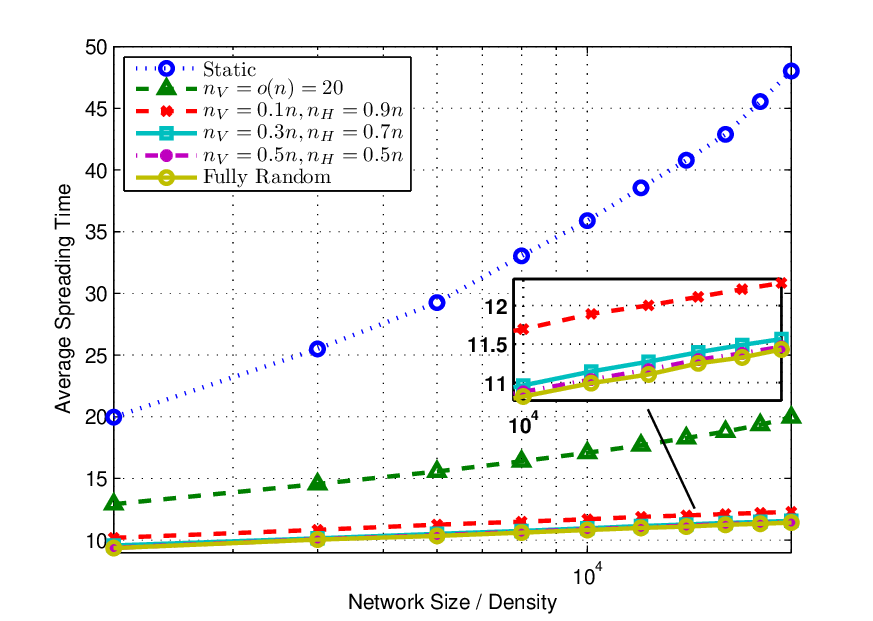}
\caption {Average Spreading Time under the One-Dimensional Area Constrained Mobility Model}
\label{Tspr_OneDimScaling}
\end{figure}

\begin{figure}[h] \centering
\includegraphics[width=0.5\textwidth]{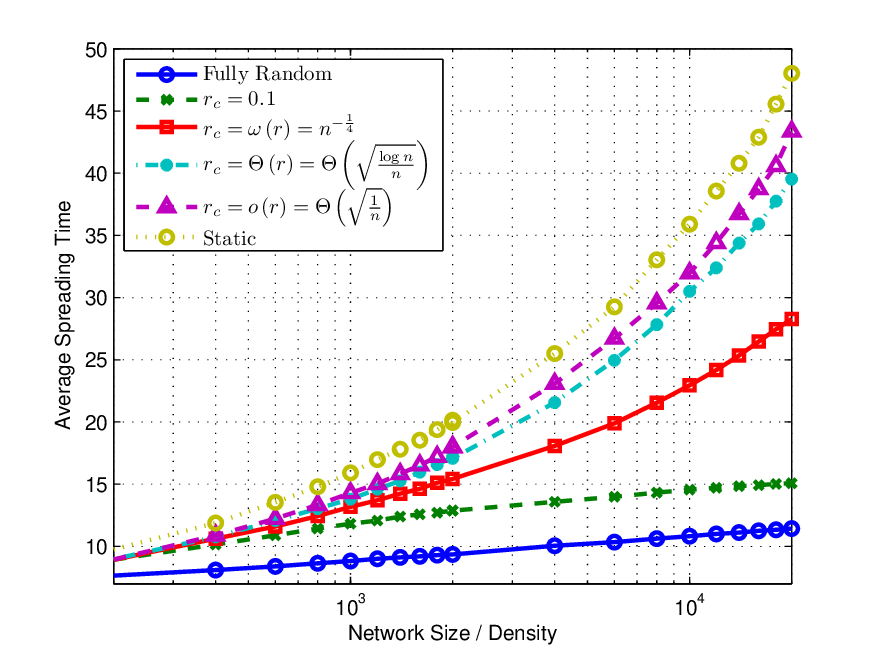}
\caption {Average Spreading Time under the Two-Dimensional Area Constrained Mobility Model}
\label{Tspr_TwoDimScaling}
\end{figure}

%





\section{Conclusions and Future Work}\label{conclusion}
\noindent In this paper, we analyze information spreading in
mobile networks, based on the proposed move-and-gossip information spreading model. For a dynamic graph that is connected under mobility, i.e., $v_{\max}+r=\Omega(\sqrt{\log n/n})$, we have derived a general expression for the information spreading time by gossip algorithms in terms of the newly defined metric mobile conductance, and shown that mobility can significantly speed up information spreading. This common framework facilitates the investigation and comparison of different mobility patterns and their effects on information dissemination.

In our current definition of mobile conductance, it is assumed that in each step, there exist some contact pairs between ${S'\left( t \right)}$ and $\overline {S'\left( t \right)}$ after the move. In extremely sparse networks (depending on the node density and transmission radius), we may have $E_{Q} \left[ {{N_{S'}}\left( {t + 1} \right)} \right]=0$. Let $T_{m}(i,j)\triangleq \inf \{t: j \in {\cal N}_i(t+1) \}$ be the first meeting time of nodes $i$ and $j$. We plan to extend the definition of mobile conductance to the scenario with $E[T_{m}(i,j)]<\infty$.




%


\appendices
\section{Bottleneck Segmentation for Velocity Constrained Mobility Model}\label{Lemma1}
\begin{lemma}\label{velocity-segmentation}
The bottleneck segmentation in mobile conductance evaluation under the RGG and velocity constrained
mobility model is a vertical straight line bisecting the unit square.
\end{lemma}
\begin{proof}
Given an arbitrary $S'\left(t\right)$ satisfying ${\left|S'\left(t\right)\right|}={n_0}<n/2$, it is necessary to minimize $\mathbb{E}_Q\left[ {{N_{S'}}\left( {t + 1} \right)} \right]$ in order to achieve the minimum of mobile conductance defined in
\eqref{mobile-conductance-approx}. To achieve the minimum, we first argue the existence of a border between $S'\left( t \right)$ and $\overline {S'\left( t \right)}$ and then determine the border type.

The expected number of contact pairs after the move can be represented as
\begin{align} \label{ArgueABorder}
&\mathbb{E}_Q\left[ {{N_{S'}}\left( {t + 1} \right)} \right] \nonumber \\
= &\mathbb{E}_Q\left[ \begin{array}{l}
\sum\limits_{i \in S'\left( t \right),j \in V\left( t \right)} {{I_{ij}}} \left( {t + 1} \right) \nonumber \\
 - \sum\limits_{i \in S'\left( t \right),j \in S'\left( t \right)} {{I_{ij}}} \left( {t + 1} \right)
\end{array} \right]  \nonumber \\
=&\left| {S'\left( t \right)} \right|n\pi {r^2} - \mathbb{E}_Q\left[
{\sum\limits_{i \in S'\left( t \right),j \in S'\left( t \right)}
{{I_{ij}}} \left( {t + 1} \right)} \right].
\end{align}

Therefore minimizing $E_Q\left[ {{N_{S'}}\left( {t + 1} \right)} \right]$ is equivalent to maximizing the second term in \eqref{ArgueABorder}. To be specific, in the second term $E_Q\left[ {{I_{ij}}\left( {t + 1} \right)} \right] > 0$ only if $i$ and $j$ can possibly move to positions within a distance of $r$, i.e., $ \left| {X_i\left( t \right)-X_j\left( t \right)} \right| < 2v_{\max }+r$, and the maximum is reached when the number of such node pairs in ${S'\left(t\right)}$ is maximized. It is claimed that this maximum is achieved only when the nodes in $S'\left( t \right)$ occupy an exclusive region, say $\mathbb{R}$, which does not contain any nodes from $\overline {S'\left( t \right)}$. Otherwise, if there exists a node ${\mathord{\buildrel{\lower3pt\hbox{$\scriptscriptstyle\frown$}}
\over j} } \in \overline {S'\left( t \right)}$ inside $\mathbb{R}$, simply replacing ${\mathord{\buildrel{\lower3pt\hbox{$\scriptscriptstyle\frown$}}
\over j} }$ with any node ${\mathord{\buildrel{\lower3pt\hbox{$\scriptscriptstyle\smile$}}
\over j} }\in S'(t)$ on the border at least will not decrease the second term in \eqref{ArgueABorder}, given that nodes in RGG are uniformly distributed. Therefore, without loss of generality, we can restrict our attention to the scenarios where there is a closed border separating ${S'\left(t\right)}$ and $\overline {S'\left( t \right)}$, and the nodes at least $2v_{\max }+r$ away from the border cannot have meaningful contact after the move.
\begin{figure}[h] \centering
\includegraphics[width=0.35\textwidth]{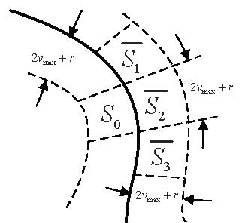}
\caption {A network cut in velocity constrained mobile networks}
\label{Velocity-Border}
\end{figure}

Denote by
${B\left( t \right)}$ the length of the border between ${S'\left(t\right)}$ and $\overline {S'\left( t \right)}$ and partition the area
near the border into bins of area (approximately) $S_b={\left(
{2{v_{\max }} + r} \right)^2}$ as shown in Fig.~\ref{Velocity-Border}. The
nodes in a bin can contact nodes in three bins on the opposite side.
For example, the nodes in ${S_0}$ can reach $\overline {{S_1}}$,
$\overline {{S_2}}$, and $\overline {{S_3}}$ after the move. Given
$v_{\max }$, $r$ and $n$, the number of possible contact pairs for
${S_0}$ is on the order of $\Theta \left( {{n^2}S_b^2} \right)$. The
total number of contact pairs after the move is
\begin{align*}
{N_{S'}}\left( {t + 1} \right) = \frac{{B\left( t
\right)}}{{2{v_{\max }} + r}}\Theta \left( {{n^2}S_b^2} \right).
\end{align*}

Therefore, the number of contact pairs after the move is
proportional to the length of the border before the move, i.e.
$N_{S'}(t+1) \propto B\left(t\right)$, and the mobile conductance
$\Phi_m \propto \frac {B\left(t\right)}
{\left|S'\left(t\right)\right|}$. Following the same argument as in the static case \cite{Conduct}, the ratio of $\frac
{B\left(t\right)} {\left|S\left(t\right)\right|}$ is minimized by a
vertical straight line bisecting the unit square. Therefore, the
mobile conductance under velocity constrained mobility is also
minimized through this \emph{bottleneck segmentation}.
\end{proof}

\section{Evaluation of Velocity Constrained Mobile Conductance}\label{Evaluation}
The accurate evaluation in \eqref{calculus} over the circle as shown in Fig.~\ref{velocity3} is
rather involved, therefore we loosen the requirement by only
calculating over the dashed square in the circle, as
illustrated in Fig.~\ref{velocity3}. Specifically, we replace \eqref{calculus} with
\begin{align*}
&\mathbb{E}_Q\left[ {{N_{S'}}\left( {t + 1} \right)} \right]\\
\cong & \int\limits_{ - {v_{\max }} - r}^{{v_{\max }} + r} {{{\rho '}_{S'}}\left( x \right)\int\limits_{x - \frac{r}{{\sqrt 2 }}}^{x + \frac{r}{{\sqrt 2 }}} {{{\rho '}_{\bar S'}}\left( l \right)\sqrt 2 rdldx} }.
\end{align*}

This will result in a smaller
mobile conductance, but the scaling law will not be affected in the
order sense. After some calculation, the mobile conductance is approximated by
\begin{align}\label{Velocity-conduct}
&\Phi _m \cong \left\{{\begin{array}{*{20}c}
  {\frac{1}{2}r + \frac{{v_{\max }^2 }}{{3r}},\quad \quad \quad \quad {\rm{for }}\quad v_{\max } \le \frac{1}{2} r }, \\
   { - \frac{{r^3 }}{{48v_{\max }^2 }} + \frac{{r^2 }}{{6v_{\max } }} + \frac{2}{3}v_{\max } ,\quad {\rm{for }}\quad v_{\max } > \frac{1}{2} r } . \\
\end{array}} \right.
\end{align}

\begin{biography}[{\includegraphics[width=1.0\textwidth] {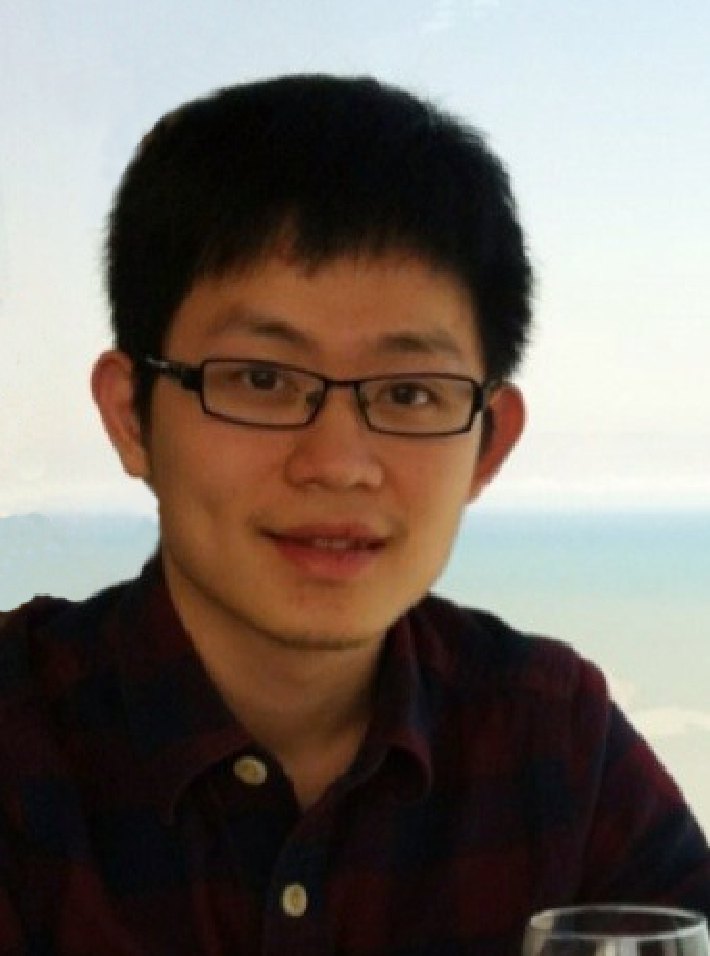}}]{Huazi Zhang}
[S'11] received the B.S. and Ph.D. degrees in electrical engineering from Zhejiang University, Hangzhou, China, in 2008 and 2013, respectively. From 2011 to 2013, he worked as a visiting scholar in the Department of Electrical and Computer Engineering, NC State University, Raleigh, NC.

His research interests include OFDM and MIMO systems, clustering and compressive sensing, coding techniques and cognitive radio networks. His recent interests are mobile networking, big data and social networks,

He has reviewed papers for IEEE journals such as Journal on Selected Areas in Communications,
Transactions on Signal Processing, Transactions on Wireless Communications, Transactions on Communications and Signal Processing Letters, and conferences such as INFOCOM, ICC, GlobeCom, etc. He is a student member of the IEEE.
\end{biography}

\begin{biography}[{\includegraphics[width=0.9\textwidth] {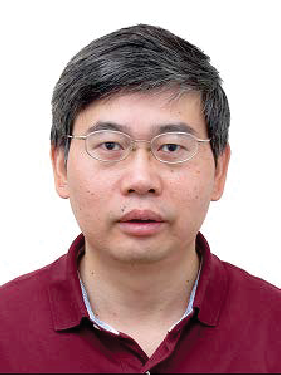}}]{Zhaoyang Zhang}
[M'02] received his B.Sc. and Ph.D. degrees from Zhejiang University
in 1994 and 1998, respectively. He is currently a full professor with the
Department of Information Science and Electronic Engineering, Zhejiang University. He was selected into the Supporting Program for New
Century Excellent Talents in University (NCET) by the Ministry of
Education, China, in 2009. His current research interests are
mainly focused on network information theory and
advanced coding theory, network signal processing,
cognitive radio networks and cooperative relay networks, etc., as well as their
applications in next generation wireless communication systems. He has authored
or co-authored more than 150 refereed journal and conference papers in the above areas. He has been actively serving as TPC co-chair, workshop co-chair, symposium co-chair, and TPC member for many international conferences. He is currently
serving as an Associate Editor for \emph{Wiley International
Journal of Communication Systems}.
\end{biography}

\begin{biography}[{\includegraphics[width=1.0\textwidth] {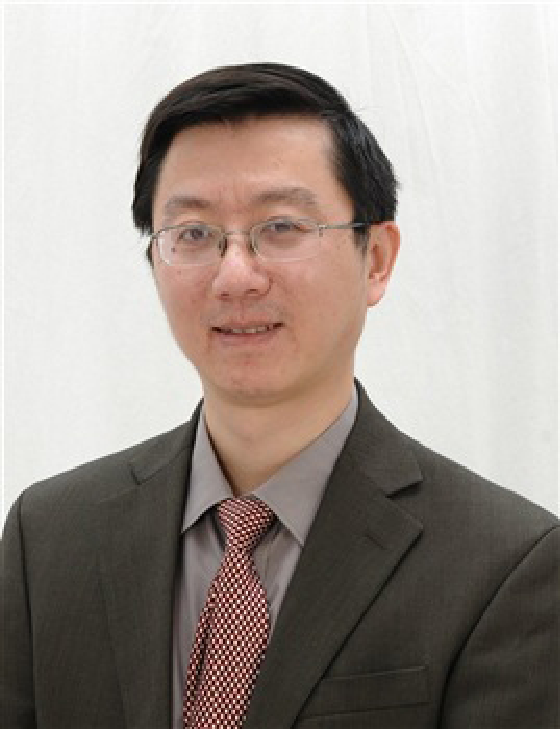}}]{Huaiyu Dai}
[M¡¯03, SM¡¯09] received the B.E. and M.S. degrees in electrical engineering from
Tsinghua University, Beijing, China, in 1996 and 1998, respectively, and the Ph.D. degree in electrical
engineering from Princeton University, Princeton, NJ in 2002.

He was with Bell Labs, Lucent Technologies, Holmdel, NJ, during summer 2000, and with AT\&T
Labs-Research, Middletown, NJ, during summer 2001. Currently he is an Associate Professor of
Electrical and Computer Engineering at NC State University, Raleigh. His research interests are in
the general areas of communication systems and networks, advanced signal processing for digital
communications, and communication theory and information theory. His current research focuses
on networked information processing and  crosslayer design in wireless networks, cognitive radio
networks, wireless security, and associated information-theoretic and computation-theoretic analysis.

He has served as an editor of IEEE Transactions on Communications, Signal Processing, and Wireless
Communications. He co-edited two special issues of EURASIP journals on distributed signal
processing techniques for wireless sensor networks, and on multiuser information theory and related
applications, respectively. He co-chairs the Signal Processing for Communications Symposium of
IEEE Globecom 2013, the Communications Theory Symposium of IEEE ICC 2014, and the Wireless
Communications Symposium of IEEE Globecom 2014.
\end{biography}
\end{document}